\newcommand{\oldbfe}[1]{\begin{bfseries}\emph{#1}\end{bfseries}}
\newcommand{\myra}{\mbox{$\:\rightarrow\:$}}
\newcommand{\LL}{\mbox{$\ldots$}}
\newcommand{\C}[1]{\mbox{$\{{#1}\}$}}           
\newcommand{\NI}{\noindent}
\newcommand{\HB}{\hfill{$\Box$}}
\newcommand{\II}{\vspace{2 mm}}
\newcommand{\szkew}[1]{\relax \setbox0=\hbox{\kern -24pt $\displaystyle#1$\kern 0pt }%
\box0}
{\catcode`\@=11 \global\let\ifjusthvtest@=\iffalse}
\newcounter{oldmycaption}
\def\smallromani{\renewcommand{\theenumi}{\roman{enumi}}
\renewcommand{\labelenumi}{(\theenumi)}}
\newtheorem{theorem}{Theorem}[section]
\newtheorem{defined}[theorem]{Definition}
\newtheorem{exa}[theorem]{Example}
\newtheorem{lemma}[theorem]{Lemma}
\newtheorem{corollary}[theorem]{Corollary}
\newtheorem{claim}{Claim}
\newtheorem{exe}{Exercise}
\newtheorem{pro}{Problem}
\newcounter{symbol}
\newcommand{\indexsyma}[1]%
{\stepcounter{symbol}\index{zzz1 \thesymbol @\protect#1}}
\newcommand{\indexsymb}[1]%
{\stepcounter{symbol}\index{zzz2 \thesymbol @\protect#1}}
\newcommand{\indexsymc}[1]%
{\stepcounter{symbol}\index{zzz3 \thesymbol @\protect#1}}
\newcommand{\indexsymd}[1]%
{\stepcounter{symbol}\index{zzz4 \thesymbol @\protect#1}}
\newcommand{\indexsyme}[1]%
{\stepcounter{symbol}\index{zzz5 \thesymbol @\protect#1}}
  \newenvironment{proof}
        {\medskip\noindent{\bf Proof.}}
        {\hfill$\Box$\medskip}
\title{Optimal Strategies in Sequential Bidding}
\author{Krzysztof R. Apt \thanksref{t1}\thanksref{t2}\\
\and \mbox{Vangelis Markakis \thanksref{t1}\thanksref{t3}
\thankstext{t1}{CWI, Kruislaan 413, 1098 SJ Amsterdam, The Netherlands}
\thankstext{t2}{Institute of Logic, Language and Computation, University of Amsterdam}
\thankstext{t3}{current affiliation: Athens University of Economics and Business, Athens, Greece}
}}
\date{}
\begin{document}
\maketitle

\begin{abstract}
  We are interested in mechanisms that maximize social welfare.  In
  \cite{ACGM08} this problem was studied for multi-unit auctions
  with unit demand bidders and for the public
  project problem, and in each case social welfare undominated
  mechanisms in the class of feasible and incentive compatible
  mechanisms were identified.
  
  One way to improve upon these optimality results is by allowing the
  players to move sequentially.  With this in mind, we study here
  sequential versions of two feasible Groves mechanisms used for
  single item auctions: the Vickrey auction and the Bailey-Cavallo mechanism.
  
  Because of the absence of dominant strategies in this sequential
  setting, we focus on a weaker concept of an optimal strategy. For
  each mechanism we introduce natural optimal strategies and observe
  that in each mechanism these strategies exhibit different behaviour.
  However, we then show that among all optimal strategies, the one we
  introduce for each mechanism maximizes the social welfare when each
  player follows it. The resulting social welfare can be larger than
  the one obtained in the simultaneous setting. Finally, we show that,
  when interpreting both mechanisms as simultaneous ones, the vectors
  of the proposed strategies form a Pareto optimal Nash equilibrium in
  the class of optimal strategies.
\end{abstract}

\section{Introduction}
\label{sec:intro}

\subsection{Motivation}

Suppose that a group of agents would like to determine who among them
values a given object the most. A natural way to approach this problem
is by viewing it as a single unit auction.  Such an auction is
traditionally used as a means of determining by a seller to which
bidder and for which price the object is to be sold. But the
underlying mechanism can also be used in the situation we are
interested in, since in both cases the requirement of incentive
compatibility, that is, of preventing manipulations, remains the same.
On the other hand, the absence of a seller changes the perspective and
leads to different considerations. Instead of maximizing the revenue
of the seller we are then interested in maximizing the social welfare,
that is in determining the winner at a minimal cost.

This brings us to the problem of finding incentive compatible
mechanisms that are optimal in the sense that no other incentive
compatible mechanism generates a larger social welfare.  Recently, in
\cite{ACGM08} this problem was studied for two domains: multi-unit
auctions with unit demand bidders and the public project problem of
\cite{Cla71}.  It was proved that for the first domain the
optimal-in-expectation linear (OEL) redistribution mechanisms,
introduced in~\cite{GC08b}, are optimal, while for the second one the
pivotal mechanism is optimal. 

\subsection{Sequentiality}

One way to improve upon these optimality results is by relaxing the
assumption of simultaneity and allowing the players to move
sequentially.  This set up was studied in \cite{AE08} for various
versions of the public project problem and it was shown that in
the sequential pivotal mechanism natural strategies exist that allow
the players to increase the social welfare that would be generated if
they moved simultaneously. In this paper we consider such a modified
setting for the case of single unit auctions.  We call it
\oldbfe{sequential bidding} as the concept of a `sequential auction'
usually refers to a sequence of auctions, see, e.g. \cite[ chapter
15]{Kri02}.

So we assume that there is a single object for sale and the
players announce their bids sequentially. In contrast to the open cry
auctions each player announces his bid \emph{exactly once}. Once all
bids have been announced, a mechanism is used to allocate the
object to the highest bidder and determine his payments, to a bank
and possibly to other players.  Such a sequential setting is very natural
when we wish to determine which agent values a given object the
most: the agents then simply state in a random order their valuations.

We study here sequential versions of two incentive compatible
mechanisms for single unit auctions, Vickrey auction and
Bailey-Cavallo mechanism of \cite{Bai97} and \cite{Cav06}, the latter
being a simplest and most natural mechanism in the class of OEL
mechanisms. In the former mechanism the payment consists of the second
highest bid and is sent to the bank, while in the latter the payments
are generated both to the bank and to other players.

\subsection{Results}

We first show that in the sequential Vickrey auction and sequential
Bailey-Cavallo mechanism no \emph{dominant strategies} exist (except
for the last player). Therefore we settle on a weaker concept, that of
an \emph{optimal strategy}. An optimal strategy for a player $i$
yields for all type vectors a best response to all joint strategies of
the other players, under the assumption that the players that follow
$i$ are myopic (i.e., their strategy does not depend on the types of
the previous players). For example truth telling is a myopic strategy.
These strategies exhibit different behaviour in these two mechanisms.
In sequential Vickrey auction optimal strategies yield the same payoff
against each vector of optimal strategies of the other players, which
is not the case in the sequential Bailey-Cavallo mechanism.

In both mechanisms we propose natural optimal strategies that differ
from truth-telling.  While truth-telling strategy focuses only on
player's own payoff, the proposed strategies are additionally `good'
for the society: in both mechanisms they yield the maximal social
welfare, when each player follows it. In addition, the same outcome is
realized under these optimal strategies as under truth-telling.

In the sequential Vickrey auction this strategy is also
\emph{socially maximal}, which means that it simultaneously guarantees
the maximal utility to every other player, under the assumption that
they are truth-telling.  In contrast, in the sequential Bailey-Cavallo
mechanism no socially maximal strategy exists, except for the first
and last player. (We actually establish a stronger negative result.)
The nature of the proposed strategies can be further clarified when
{\it interpreting} both mechanisms as simultaneous ones. We show that in
both cases their vectors form then a Pareto optimal Nash equilibrium
in the class of optimal strategies.

The paper is organized as follows. We first recall
Groves mechanisms, in particular Vickrey auction and Bailey-Cavallo
mechanism. We review in Section \ref{sec:sequential1} the
concepts introduced in \cite{AE08} concerned with a taxonomy of
strategies in sequential mechanisms.  In Section \ref{sec:seq-groves}
we establish some preparatory results for sequential Groves
mechanisms.

The main results are established in the next two sections,
\ref{sec:vic} and \ref{sec:bc}, in which we deal with the sequential
versions of Vickrey auction and Bailey-Cavallo mechanism. In Section
\ref{sec:nash} we discuss the corresponding
simultaneous mechanisms, and in Section \ref{sec:conc} 
discuss possible future research.


\section{Preliminaries}\label{sec:prelim}

We collect here the necessary background material.
Readers familiar with Groves mechanisms can safely move to Subsection \ref{subsec:two}.

\subsection{Tax-based mechanisms}

We first briefly review tax-based mechanisms (see, e.g.,~\cite{MWG95}).
Assume that there is a finite set of possible outcomes or \oldbfe{decisions} $D$,
a set $\{1, \LL, n\}$ of players where $n \geq 2$, and for each player $i$
a set of \oldbfe{types} $\Theta_i$ and an (\oldbfe{initial})
\oldbfe{utility function} $ v_i : D \times \Theta_i \myra \mathbb{R}$. 
A \oldbfe{decision rule} is a function $f: \Theta \myra D$, where
$\Theta := \Theta_1 \times \cdots \times \Theta_n$.  

In a \oldbfe{tax-based mechanism}, in short a \oldbfe{mechanism}, each
player reports a type $\theta_i$ and based on this, the mechanism
selects an outcome and a payment to be made by every agent. Hence a
mechanism is given by a pair of functions $(f,t)$, where $f$ is the
decision rule and $t = (t_1,...,t_n)$ is the tax function that
determine the decision taken and the players' payments given the
reported types $\theta_1, \LL, \theta_n$, i.e., $f: \Theta \myra D$,
and $t: \Theta \myra \mathbb{R}^n$.

We assume that the (\oldbfe{final}) \oldbfe{utility function} for
player $i$ is a function $u_i: D \times \mathbb{R}^n \times \Theta_i
\myra \mathbb{R}$ defined by $ u_i(d,t_1, \LL, t_n, \theta_i) :=
v_i(d, \theta_i) + t_i $.  For each vector $\theta$ of announced
types, if $t_i(\theta) \geq 0$, player $i$ \oldbfe{receives}
$t_i(\theta)$, and if $t_i(\theta) < 0$, he \oldbfe{pays}
$|t_i(\theta)|$.  Thus when the true type of player $i$ is $\theta_i$
and his announced type is $\theta'_i$, his final utility is
\[
u_i((f,t)(\theta'_i,
\theta_{-i}), \theta_i) = v_i(f(\theta'_i, \theta_{-i}), \theta_i) +
t_i(\theta'_i, \theta_{-i}),
\] 
where $\theta_{-i}$ are the types
announced by the other players.

We say that a mechanism $(f,t)$ is
  \begin{enumerate}
  \item[$\bullet$] \oldbfe{efficient} if for all $\theta \in
\Theta$, $f(\theta)\in {\rm argmax}_{d \in D} \sum_{i=1}^n v_i(d,\theta_i)$, 
i.e., the taken decision maximizes the initial social welfare,

  \item[$\bullet$] \oldbfe{feasible} if for all $\theta$, $\sum_{i = 1}^{n} t_i(\theta) \leq 0$, i.e., the mechanism does not need to be funded by an external source,

  \item[$\bullet$] \oldbfe{incentive compatible}
if for all $\theta$, $i \in \C{1,\LL,n}$ and
$\theta'_i$,
\[
u_i((f,t)(\theta_i, \theta_{-i}), \theta_i) \geq
u_i((f,t)(\theta'_i, \theta_{-i}), \theta_i).
\]
  \end{enumerate}

\subsection{Groves mechanisms}

Each \oldbfe{Groves mechanism} is an efficient tax-based mechanism
$(f,t)$ such that the following hold for all $\theta \in
\Theta$\footnote{Here and below $\sum_{j\not=i}$ is a shorthand for
  the summation over all $j \in \{1,\LL,n\}, \ j \not=i$ and similarly
  for $\max_{j\not=i}$.}:

\begin{itemize}

\item[$\bullet$] $t_i : \Theta \myra \mathbb{R}$ is defined by $t_i(\theta) := g_i(\theta) + h_i(\theta_{-i})$, where

\item[$\bullet$] $g_i(\theta) := \sum_{j \neq i} v_j(f(\theta), \theta_j)$,
  
\item[$\bullet$] $h_i: \Theta_{-i} \myra \mathbb{R}$ is an arbitrary function.

\end{itemize}

Intuitively, $g_i(\theta)$ represents the initial social welfare from the
decision $f(\theta)$, when player $i$'s (initial) utility is not counted.  
Recall now the following crucial result, see
e.g.,~\cite{MWG95}.  
\II

\NI \textbf{Groves Theorem} Every Groves mechanism $(f,t)$ is incentive compatible.  \II


A special Groves mechanism, called the \oldbfe{pivotal mechanism}\footnote{This is sometimes referred to as the VCG mechanism.}, is obtained using
$ h_i(\theta_{-i}) := - \max_{d \in D} \sum_{j \neq i}
v_j(d,\theta_j)$.
In this case, the tax, $t_i^{p}(\theta)$, is defined by
\[
t_i^{p}(\theta)  := \sum_{j \neq i} v_j(f(\theta), \theta_j) - \max_{d \in D} \sum_{j \neq i} v_j(d, \theta_j),
\]
which shows that the pivotal mechanism is feasible.

\subsubsection{Groves mechanisms for single item auctions}

Given a sequence $a := (a_1, \LL, a_j)$ of
reals we denote the least $l$ such that $a_l = \max_{k \in \{1,
  \ldots, j\}} a_k$ by $\textrm{argsmax}_{k \in \{1, \ldots, j\}}
a_k$ or simply by $\textrm{argsmax} \: a$.
A \oldbfe{single item sealed bid auction}, in short an \oldbfe{auction},
is modelled by choosing
\begin{itemize}

\item $D = \{1, \ldots, n\}$,

 \item each $\Theta_i$ to be the set ${\cal R}_+$ of non-negative reals;
$\theta_i \in \Theta_i$ is player $i$'s valuation of the object,

\item 
$
        v_i(d, \theta_i) := 
        \left\{
        \begin{array}{l@{\extracolsep{3mm}}l}
        \theta_i   & \mathrm{if}\  d = i \\
        0      & \mathrm{otherwise}
        \end{array}
        \right.
$

\item 
$
f(\theta) := \textrm{argsmax} \: \theta.
$

\end{itemize}

Here decision $d \in D$ indicates which player is the winner, that is
the player to whom the object is sold.  Hence the object is sold to
the highest bidder and in the case of a tie we allocate the object to
the player with the lowest index.\footnote{If we make a different
  assumption on breaking ties, some of our proofs need to be adjusted,
  but similar results hold.}  By the choice of $f$ each auction
mechanism $(f,t)$ is efficient.

Note that player's $i$ final utility 
in an auction mechanism $(f,t)$ is
\[
        u_i((f,t)(\theta'_i, \theta_{-i}), \theta_i) =
        \left\{
        \begin{array}{l@{\extracolsep{3mm}}l}
        \theta_i + t_i(\theta'_i, \theta_{-i})  & \mathrm{if}\  \textrm{argsmax}\: \theta' = i \\
        t_i(\theta'_i, \theta_{-i})     & \mathrm{otherwise}
        \end{array}
        \right.
\]
where player's $i$ received type is $\theta_i$, his
announced type is $\theta'_i$, $\theta_{-i}$ is the vector of types announced by other players and $\theta' = (\theta'_i,\theta_{-i})$.

By a \oldbfe{Groves auction} we mean a Groves mechanism for an auction setting.
Below, given a sequence $s$ of reals we denote by $\theta^*$ its reordering in descending order.
Then $\theta^*_k$ is the $k$th largest
element in $\theta$. For example, for $\theta = (1,5,0,3,2)$ we have 
${(\theta_{-2})}^*_2 = 2$ since $\theta_{-2} = (1,0,3,2)$.

The \oldbfe{Vickrey auction} is the pivotal mechanism for a single item auction. 
Hence it uses the following taxes:
\[
        t^{p}_i(\theta) := 
        \left\{
        \begin{array}{l@{\extracolsep{3mm}}l}
        - \theta^*_{2}   & \mathrm{if}\  \textrm{argsmax} \: \theta = i \\
        0      & \mathrm{otherwise}
        \end{array}
        \right.
\]
That is, the winner pays the second highest bid. 

\subsubsection{The Bailey-Cavallo mechanism}

The \oldbfe{Bailey-Cavallo} mechanism, in short \oldbfe{BC auction}, is the
mechanism originally proposed in \cite{Bai97} and \cite{Cav06} (in fact,
Bailey's mechanism is not always the same as Cavallo's mechanism, but it is
in the setting in which we study it).
To define it note that each Groves mechanism is uniquely determined by a
\oldbfe{redistribution function} $r := (r_1, \LL, r_n)$, where each
$r_i: \Theta_{-i} \myra \mathbb{R}$ is an arbitrary function. Given a
redistribution function $r$ the tax for player $i$ is defined by
$t_i(\theta) := t^{p}_i(\theta) + r_i(\theta_{-i})$, i.e., we can think of a Groves mechanism as first running the pivotal mechanism and then redistributing the pivotal taxes.

The BC auction is a Groves
mechanism defined by using the following redistribution function $r :=
(r_1, \LL, r_n)$ (to ensure that it is well-defined we need to assume
that $n \geq 3$):
\[
r_i(\theta_{-i}) := \frac{{(\theta_{-i})}^*_2}{n}
\]
that is, by using
$
t_i(\theta) := t^{p}_i(\theta) + r_i(\theta_{-i})
$.

The BC auction is feasible since 
for all $i \in \{1, \LL, n\}$ and $\theta$ we have $(\theta_{-i})^*_2 \leq \theta^*_2$ and as a result
\[
\sum_{i = 1}^{n} t_i(\theta)  = \sum_{i = 1}^{n} t^{p}_i(\theta) + \sum_{i = 1}^{n} r_i(\theta_{-i}) = 
\sum_{i = 1}^{n} \frac{- \theta^*_2 + (\theta_{-i})^*_2}{n} \leq 0.
\]

Furthermore, given the sequence $\theta$ of submitted types, note that
if player $i$ is the first or the second highest bidder, then
${(\theta_{-i})}^*_{2} = {\theta}^*_{3}$.
For the rest of the players, ${(\theta_{-i})}^*_{2} = {\theta}^*_{2}$.



Hence
\begin{equation}\label{eq:bc-redistr}
\sum_{i = 1}^{n} r_i(\theta_{-i}) = \sum_{i = 1}^{n} \frac{{(\theta_{-i})}^*_2}{n} =
\frac{n-2}{n} {\theta}^*_{2} + \frac{2}{n} \theta^*_{3},
\end{equation}
and
\begin{equation}\label{eq:bc-tax}
\sum_{i = 1}^{n} t_i(\theta) = \frac{2}{n} (\theta^*_{2} - \theta^*_{3}) 
\end{equation}
so when the second highest type, $\theta^*_{2}$, is strictly positive
and the third highest type, $\theta^*_{2}$, is non-negative, the BC
auction yields a strictly higher social welfare than the pivotal
mechanism. Note also that the aggregate tax is $0$ when the second
highest and the third highest bids are the same.


In some situations it is useful to employ Groves auctions that
maximize the final social welfare.  This is for example the case when,
as discussed in the introduction, the
players want to determine at a minimal cost, using an incentive compatible
mechanism, who among them values a given object most.  For such
applications by the above observation Vickrey auction is not an
appropriate mechanism. In \cite{ACGM08} it was proved that the BC auction
is an appropriate mechanism in the sense that no Groves auction

\begin{itemize}
\item always generates a larger or equal social welfare than BC,

\item sometimes generates a strictly larger social welfare than BC.
\end{itemize}
This explains our interest in BC auctions.

\subsubsection{A useful lemma}
\label{subsec:two}

In what follows the following observation will be helpful.

\begin{lemma}\label{lem:ab}
In each Groves auction for all $\theta \in \Theta$
\begin{enumerate} \smallromani
\item if $\theta_i > \max_{j \neq i} \theta_j$ and $\textrm{argsmax}
  \: (\theta'_i, \theta_{-i}) \neq i$, then $\theta_i + t_i(\theta) >
  t_i(\theta'_i, \theta_{-i})$,

\item  if  $\theta'_i > \max_{j \neq i} \theta_j > \theta_i$, then
$t_i(\theta) >  \theta_i + t_i(\theta'_i, \theta_{-i})$.
\end{enumerate}
  
\end{lemma}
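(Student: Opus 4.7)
The plan is to exploit the decomposition $t_i(\theta) = g_i(\theta) + h_i(\theta_{-i})$ that characterizes every Groves mechanism. Since both sides of each inequality involve the same vector $\theta_{-i}$, the terms $h_i(\theta_{-i})$ appear identically on both sides and simply cancel. This reduces each claim to a direct comparison of $g_i(\theta) = \sum_{j \neq i} v_j(f(\theta), \theta_j)$ under the two different reports, adjusted by $\theta_i$.

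In the single-item auction setting, $g_i$ has a particularly simple shape: if $f(\cdot) = k$ with $k \neq i$, then $g_i(\cdot) = \theta_k$, while if $f(\cdot) = i$, then $g_i(\cdot) = 0$. So I only need to identify the winner under each of the two bid vectors and track the single nonzero contribution.

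For (i), the strict inequality $\theta_i > \max_{j \neq i} \theta_j$ together with the tie-breaking rule gives $f(\theta) = i$, hence $g_i(\theta) = 0$. The hypothesis $\textrm{argsmax}(\theta'_i, \theta_{-i}) = k \neq i$ gives $g_i(\theta'_i, \theta_{-i}) = \theta_k$. After canceling $h_i(\theta_{-i})$, the claim becomes $\theta_i > \theta_k$, which follows from $\theta_i > \max_{j\neq i}\theta_j \geq \theta_k$. For (ii), $\theta'_i > \max_{j\neq i}\theta_j$ forces $f(\theta'_i, \theta_{-i}) = i$ and hence $g_i(\theta'_i, \theta_{-i}) = 0$, while $\theta_i < \max_{j\neq i}\theta_j$ forces the winner of $\theta$ to be some $k \neq i$ with $\theta_k = \max_{j\neq i}\theta_j$, so $g_i(\theta) = \max_{j\neq i}\theta_j$. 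The inequality reduces to $\max_{j\neq i}\theta_j > \theta_i$, which is exactly the standing hypothesis.

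I do not expect a genuine obstacle here: the lemma is a direct unpacking of the Groves tax formula against the specific shape of $v_j$ and $f$ for single-item auctions. The only point that needs a sentence of care is checking that the tie-breaking rule delivers the intended winner in each case, but every relevant comparison in the hypotheses is strict, so tie-breaking plays no active role.
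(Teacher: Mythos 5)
Your proposal is correct and rests on the same mechanism as the paper's proof: the term in the tax that depends only on $\theta_{-i}$ (your $h_i(\theta_{-i})$, the paper's $r_i(\theta_{-i})$ plus the $\max$ term in the pivotal tax) cancels on both sides, reducing each claim to a comparison of winners' values that the strict hypotheses settle immediately. The only difference is one of packaging: the paper verifies the two properties for the Vickrey auction (stated as ``clearly hold'') and then lifts them via $t_i = t_i^p + r_i(\theta_{-i})$, whereas you work directly with the Groves form $t_i = g_i + h_i$ and spell out the winner-by-winner computation of $g_i$, which fills in exactly the step the paper leaves implicit.
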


Part $(i)$ states that if player $i$ is a clear winner, given
$\theta_{-i}$ ($\theta_i > \max_{j \neq i} \theta_j$), then it is strictly better
for him to submit his true bid, $\theta_i$, than a losing bid
$\theta'_i$.  In turn, part $(ii)$ states that if player $i$ is a
clear loser, given $\theta_{-i}$ ($\max_{j \neq i} \theta_j > \theta_i$), then it
is strictly better for him to submit his true bid, $\theta_i$, than a
strictly winning bid $\theta'_i$.
\II

\begin{proof}
  Both properties clearly hold for Vickrey auction.  In an arbitrary
  Groves auction the tax for player $i$ is defined by $t_i(\theta) :=
  t^{p}_i(\theta) + r_i(\theta_{-i})$, where $t^{p}_{i}(\theta)$ is
  his tax in Vickrey auction and $(r_1, \LL, r_n)$ is a redistribution
  function that depends only on $\theta_{-i}$. So both properties extend to an arbitrary Groves auction.
\end{proof}

The above lemma will allow us to establish in Section
\ref{sec:seq-groves} some general results about sequential Groves
auctions, which we will later apply to the sequential Vickrey and
sequential BC auctions.

\section{Sequential mechanisms}
\label{sec:sequential1}

We are interested in sequential mechanisms, in particular sequential
auction mechanisms, in which the players announce their types
sequentially.  In this section we review the relevant concepts, some
of which were introduced in \cite{AE08}.

As before, we assume a finite set of decisions $D$, a
set $\{1, \LL, n\}$ of players where $n \geq 2$, and for each player
$i$ a set of types $\Theta_i$ and utility function $ v_i : D \times
\Theta_i \myra \mathbb{R}$.  For notational simplicity, and without
loss of generality, we assume the order to be $1, \LL, n$.

So each player $i$ \emph{knows} the types announced by players $1, \LL, i-1$,
and can use this information to decide which type to announce.
To properly describe this situation we need to specify what is a strategy
in this setting.  

A \oldbfe{strategy} of player $i$ in a sequential mechanism
is a function
\[
s_i: \Theta_1 \times \LL \times \Theta_i \myra \Theta_i.
\]
In this context truth-telling, as a strategy, is
represented by the projection function $\pi_{i}(\cdot)$, defined by
$\pi_i(\theta_1,\LL,,\theta_i) :=\theta_{i}$.  

We assume that in the considered sequential mechanism each
player uses a strategy $s_i(\cdot)$ to select the type he will
announce.  Then if the vector of types that the players
receive is $\theta$ and the vector of strategies that they decide to follow is $s(\cdot) := (s_1(\cdot), \LL, s_n(\cdot))$, the vector of the announced types will be denoted by $[s(\cdot),
\theta]$, where $[s(\cdot), \theta]$ is defined
inductively by $[s(\cdot), \theta]_1 := s_{1}(\theta_1)$ and
$[s(\cdot), \theta]_{i+1} := s_{i+1}([s(\cdot), \theta]_1, \LL,
[s(\cdot), \theta]_i, \theta_{i+1})$.

In what follows we define several properties of strategies that
are appropriate for our analysis of sequential mechanisms.
To start with, we say that strategy $s_i(\cdot)$ of player $i$ is
\oldbfe{dominant} in the sequential version of the mechanism $(f,t)$
if for all
$\theta \in \Theta$, all strategies $s'_i(\cdot)$ of player $i$
and all vectors $s_{-i}(\cdot)$ of strategies of players $j \neq i$
\[
u_{i}((f,t)([(s_{i}(\cdot), s_{-i}(\cdot)), \theta]), \theta_i) \geq
u_{i}((f,t)([(s'_{i}(\cdot), s_{-i}(\cdot)), \theta]), \theta_i).
\]

A weaker notion is that of a rational strategy. We define it by
backward induction. Hence starting with player $n$, we say that strategy
$s_n(\cdot)$ is \oldbfe{rational} in the sequential
version of the mechanism $(f,t)$ if for all $\theta \in \Theta$ and
$\theta'_n \in \Theta_n$
\[
u_n((f,t)(s_n(\theta_1,\LL, \theta_{n}), \theta_{-n}), \theta_n) \geq
u_n((f,t)(\theta'_n, \theta_{-n}), \theta_n).
\]

Assume now that the notion of a rational strategy has been defined for players $n,n-1, \LL, i+1$.
If for any $j \in \C{i+1, \LL, n}$ no rational strategy exists for player $j$, then 
so is the case for player $i$.

Otherwise we say that strategy $s_i(\cdot)$ of player $i$ is
\oldbfe{rational} in the sequential version of the mechanism $(f,t)$ if for all
strategies $s'_{i}(\cdot)$ of player $i$, all
sequences of rational strategies $s_{i+1}(\cdot), \LL, s_{n}(\cdot)$ for players $i+1, \LL, n$,
and all $\theta \in \Theta$
\[
u_i((f,t)([s(\cdot),  \theta], \theta_i) \geq
u_i((f,t)([s'(\cdot), \theta], \theta_i),
\]
where 

\NI
$s(\cdot) := (\pi_1(\cdot), \LL, \pi_{i-1}(\cdot), s_{i}(\cdot),  s_{i+1}(\cdot), 
\LL, s_{n}(\cdot))$,

\NI
$s'(\cdot) := (\pi_1(\cdot), \LL, \pi_{i-1}(\cdot), s'_{i}(\cdot), s_{i+1}(\cdot), 
\LL, s_{n}(\cdot))$.
\II

\NI
(Each strategy $\pi_j(\cdot)$ can be replaced here by an arbitrary
strategy $s_j(\cdot)$.)  Note that for player $n$ the notions of
dominant and rational strategies coincide. However, for player $i$,
where $i < n$ it only holds that every dominant strategy is rational,
provided the set of rational strategies of player $i+1$ (and thus of
players $i+1, \LL, n$) is non-empty.

Another weaker notion is that of an optimal strategy. We say that
strategy $s_i(\cdot)$ of player $i$ is \oldbfe{optimal} in the
sequential version of the mechanism $(f,t)$ if for all $\theta \in
\Theta$ and all $\theta'_i \in \Theta_i$
\[
u_i((f,t)(s_i(\theta_1,\LL, \theta_i), \theta_{-i}), \theta_i) \geq
u_i((f,t)(\theta'_i, \theta_{-i}), \theta_i).
\]
Here as before, $\theta_i$ is the type that player $i$ has received and
$\theta_{-i}$ is the vector of types announced by the other players.

Call a strategy of player $j$ \oldbfe{myopic} if it does not depend on
the types of players $1, \LL, j-1$. Then a strategy $s_i(\cdot)$ of
player $i$ is optimal if for all $\theta \in \Theta$ it yields a best
response to all joint strategies of players $j \neq i$ in which the
strategies of players $i+1, \LL, n$ are myopic. In particular, an
optimal strategy is a best response to the truth-telling by players $j
\neq i$.
By choosing truth-telling as the
strategies of players $j \neq i$ we see that each dominant strategy is
optimal. For player $n$ the concepts of dominant and
optimal strategies coincide.

A particular case of sequential mechanisms are sequential
Groves mechanisms. The following lemma, see \cite{AE08}, provides us
with a sufficient condition for checking whether a strategy is optimal in
such a mechanism.

\begin{lemma} \label{lem:vcg1}
Consider a Groves mechanism $(f,t)$.
Suppose that $s_i(\cdot)$ is a strategy for player $i$ 
such that for all $\theta \in \Theta$,
$f(s_i(\theta_1,\LL,\theta_i), \theta_{-i})=f(\theta)$.
Then $s_i(\cdot)$ is optimal in the sequential version of $(f,t)$.
\end{lemma}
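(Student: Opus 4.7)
The plan is to unfold the Groves tax formula and then invoke efficiency of the decision rule $f$, mimicking the standard dominant-strategy argument for Groves mechanisms but with a key twist: the assumption lets us replace $s_i(\theta_1,\ldots,\theta_i)$ by the true $\theta_i$ inside $f$ without changing the selected decision.

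First I would fix $\theta \in \Theta$ and $\theta'_i \in \Theta_i$, and expand the two sides of the inequality that optimality demands. For a Groves mechanism we have $t_i(\theta) = \sum_{j \neq i} v_j(f(\theta), \theta_j) + h_i(\theta_{-i})$, so the final utility of player $i$ when he announces $\theta'_i$ but has true type $\theta_i$ is
\[
u_i((f,t)(\theta'_i,\theta_{-i}),\theta_i)
= \sum_{j=1}^n v_j(f(\theta'_i,\theta_{-i}), \theta_j) + h_i(\theta_{-i}),
\]
where the $v_i$ term uses the true type $\theta_i$ (not $\theta'_i$). Similarly, writing $\tilde\theta_i := s_i(\theta_1,\ldots,\theta_i)$,
\[
u_i((f,t)(\tilde\theta_i,\theta_{-i}),\theta_i)
= \sum_{j=1}^n v_j(f(\tilde\theta_i,\theta_{-i}), \theta_j) + h_i(\theta_{-i}).
\]

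Next I would apply the hypothesis $f(\tilde\theta_i,\theta_{-i}) = f(\theta)$ to rewrite the second expression with $f(\theta)$ in place of $f(\tilde\theta_i,\theta_{-i})$. Then by efficiency of $f$, the decision $f(\theta)$ maximizes $\sum_{j=1}^n v_j(d,\theta_j)$ over all $d \in D$, and in particular
\[
\sum_{j=1}^n v_j(f(\theta),\theta_j) \;\geq\; \sum_{j=1}^n v_j(f(\theta'_i,\theta_{-i}),\theta_j).
\]
Adding the common term $h_i(\theta_{-i})$ to both sides yields exactly the optimality inequality, which completes the proof.

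There is no real obstacle here: the argument is essentially the textbook proof that truth-telling is dominant in a Groves mechanism, with the single observation that the hypothesis on $s_i$ guarantees the chosen decision under $(\tilde\theta_i,\theta_{-i})$ coincides with the efficient decision under the true profile $\theta$. The only care needed is bookkeeping to make sure that, in the utility formula, the $v_i$ term is evaluated at the \emph{true} type $\theta_i$ even when the \emph{announced} type differs, so that after cancellation the efficiency of $f$ on the true profile $\theta$ is precisely what is required.
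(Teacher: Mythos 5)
Your proof is correct. The paper itself states this lemma without proof, deferring to \cite{AE08}; your argument --- expanding the Groves tax so that player $i$'s utility equals $\sum_{j=1}^n v_j(f(\cdot),\theta_j)+h_i(\theta_{-i})$ with the $v_i$-term evaluated at the true type, substituting $f(\theta)$ for $f(s_i(\theta_1,\ldots,\theta_i),\theta_{-i})$ via the hypothesis, and then invoking efficiency of $f$ on the profile $(\theta_i,\theta_{-i})$ --- is exactly the standard Groves argument that the cited proof uses, so there is nothing to add.
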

In particular, the strategy $\pi_i(\cdot)$ is optimal in the sequential version of each Groves mechanism.

There are two natural ways of maximizing players' utilitities.
The first one calls for a simultaneous maximization of other players' utilities.
That is, we say that strategy $s_i(\cdot)$ of player $i$ is
\oldbfe{socially maximal} in the sequential version of the mechanism $(f,t)$ if
it is optimal and for all optimal strategies $s'_i(\cdot)$ of player
$i$, all $\theta \in \Theta$ and all $j \neq i$

$u_j((f,t)(s_i(\theta_1,\LL, \theta_i), \theta_{-i}), \theta_j) \geq$

$u_j((f,t)(s'_i(\theta_1,\LL, \theta_i), \theta_{-i}), \theta_j)$.

So a socially maximal strategy of player $i$ simultaneously guarantees
the maximal utility to every other player, under the assumption that
players $i+1, \LL, n$ use myopic strategies (so for instance, truth-telling
strategies).

The second option is to maximize the social welfare.
We say that strategy $s_i(\cdot)$ of player $i$ is
\oldbfe{socially optimal} in the sequential version of the mechanism $(f,t)$ if
it is optimal and for all optimal strategies $s'_i(\cdot)$ of player $i$
and all $\theta \in \Theta$ 

$\sum_{j=1}^{n} u_j((f,t)(s_i(\theta_1,\LL, \theta_i), \theta_{-i}), \theta_j) \geq$ 

$\sum_{j=1}^{n} u_j((f,t)(s'_i(\theta_1,\LL, \theta_i), \theta_{-i}), \theta_j)$.

Hence a socially optimal strategy of player $i$ yields the maximal
social welfare among all optimal strategies, under the assumption that
players $i+1, \LL, n$ use myopic strategies.
Note that each socially maximal strategy is socially optimal. The converse,
as shown in \cite{AE08}, does not hold.
 
Consider now a sequential version of a given mechanism $(f,t)$ and
assume that each player $i$ receives a type $\theta_i \in \Theta_i$
and follows a strategy $s_i(\cdot)$. The resulting social welfare is
\[
SW(\theta, s(\cdot)) := \sum_{j=1}^{n} u_j((f,t)([s(\cdot), \theta]), \theta_j).
\]

We are interested in finding a sequence of optimal players' strategies
for which the resulting social welfare is always maximal. In the subsequent
sections we shall see that such a sequence of strategies can be found
for two natural sequential auction mechanisms.  However, in general, only the
following limited observation can be made.

\begin{lemma} \label{lem:optn}
  Consider a mechanism $(f,t)$ and let $s_n(\cdot)$ be a socially
  optimal strategy for player $n$.  Then
\[
SW(\theta, (s'_{-n}(\cdot), s_n(\cdot))) \geq SW(\theta, s'(\cdot))
\]
for all $\theta$ and all vectors $s'(\cdot)$ of optimal players' strategies.

\end{lemma}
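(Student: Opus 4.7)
The plan is to reduce the claim to a single application of the defining inequality of social optimality for $s_n(\cdot)$, exploiting the fact that player $n$ moves last and therefore cannot affect the earlier announcements. Fix $\theta \in \Theta$ and a vector $s'(\cdot) = (s'_1(\cdot), \LL, s'_n(\cdot))$ of optimal strategies, and let $\hat\theta := [s'(\cdot), \theta]$ be the resulting vector of announcements. By the inductive definition of $[\cdot, \theta]$, for every $j < n$ the value $\hat\theta_j$ depends only on $s'_1(\cdot), \LL, s'_j(\cdot)$ and on $\theta_1, \LL, \theta_j$; in particular it does not involve $s'_n(\cdot)$. Consequently, replacing $s'_n(\cdot)$ by $s_n(\cdot)$ leaves the first $n-1$ announcements unchanged,
\[
[(s'_{-n}(\cdot), s_n(\cdot)), \theta]_j = \hat\theta_j \quad \text{for all } j < n,
\]
while the $n$th announcement becomes $s_n(\hat\theta_1, \LL, \hat\theta_{n-1}, \theta_n)$ in place of $\hat\theta_n = s'_n(\hat\theta_1, \LL, \hat\theta_{n-1}, \theta_n)$.

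With this in hand I would invoke the socially optimal property of $s_n(\cdot)$ applied to the type vector $\tilde\theta := (\hat\theta_1, \LL, \hat\theta_{n-1}, \theta_n)$; the comparison strategy is $s'_n(\cdot)$, which is optimal because it is a component of the optimal profile $s'(\cdot)$. The defining inequality then yields
\[
\sum_{j=1}^{n} u_j((f,t)(s_n(\tilde\theta), \tilde\theta_{-n}), \theta_j) \;\geq\; \sum_{j=1}^{n} u_j((f,t)(s'_n(\tilde\theta), \tilde\theta_{-n}), \theta_j),
\]
and by the observations of the previous paragraph the two sides equal $SW(\theta, (s'_{-n}(\cdot), s_n(\cdot)))$ and $SW(\theta, s'(\cdot))$ respectively.

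The main point that requires care is the bookkeeping in the invocation of social optimality: the first $n-1$ coordinates of $\tilde\theta$ must be read as the announcements made by the other players (feeding both into the mechanism $(f,t)$ and into player $n$'s strategy input), while the second argument of each $u_j$ remains the true received type $\theta_j$, exactly as in the definition of $SW(\cdot,\cdot)$. Once this identification is laid out explicitly, the lemma follows from a single application of the defining inequality; no further assumption on $(f,t)$ or on the strategies $s'_1(\cdot), \LL, s'_{n-1}(\cdot)$ is needed, which explains why the result is general but, as the authors note, only \emph{limited} to the last player.
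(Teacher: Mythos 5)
Correct, and essentially identical to the paper's argument: the paper's entire proof is the one-line remark that the claim ``follows by the definition of a socially optimal strategy,'' and your write-up is exactly that unfolding --- since player $n$ moves last, replacing $s'_n(\cdot)$ by $s_n(\cdot)$ leaves the first $n-1$ announcements unchanged, and the defining inequality of social optimality applied at $(\hat\theta_1,\ldots,\hat\theta_{n-1},\theta_n)$ with the optimal comparison strategy $s'_n(\cdot)$ yields the claim. The bookkeeping you make explicit (which vector feeds the mechanism and player $n$'s strategy, versus the true types entering the utilities in $SW$) is precisely what the paper leaves implicit.
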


\begin{proof}
The proof follows by the definition of a socially optimal strategy.  
\end{proof}

\section{Sequential Groves auctions}
\label{sec:seq-groves}

In the remainder of the paper we study optimal player strategies in the
sequential Vickrey and BC auctions.
We collect here auxiliary results that hold for all Groves auctions, which we will use in the subsequent two sections.

We shall often rely on the following lemma
concerning optimal strategies.  We stipulate here and elsewhere that for $i = 1$ we have
$\max_{j \in \{1, \ldots, i-1\}} \theta_j = -1$ so that for $i = 1$ we
have $\theta_i > \max_{j \in \{1, \ldots, i-1\}} \theta_j$.

\begin{lemma} \label{lem:vic}

Assume that $s_i(\cdot)$ is an optimal strategy for player $i$ in a sequential Groves auction and that $\theta_1,...,\theta_{i-1}$ are the types announced by the players preceding $i$.

\begin{enumerate} \smallromani
\item Suppose  $\theta_i > \max_{j \in \{1, \ldots, i-1\}} \theta_j$ and $i < n$. Then $s_i(\theta_1, \LL, \theta_i) = \theta_i$.

\item Suppose  $\theta_i > \max_{j \in \{1, \ldots, i-1\}} \theta_j$ and $i = n$. Then $s_i(\theta_1, \LL, \theta_i) >
\max_{j \in \{1, \ldots, i-1\}} \theta_j$.

\item Suppose  $\theta_i \leq \max_{j \in \{1, \ldots, i-1\}} \theta_j$ and $i < n$. Then $s_i(\theta_1, \LL, \theta_i) \leq
\max_{j \in \{1, \ldots, i-1\}} \theta_j$.

\item Suppose  $\theta_i < \max_{j \in \{1, \ldots, i-1\}} \theta_j$ and $i = n$. Then $s_i(\theta_1, \LL, \theta_i) \leq
\max_{j \in \{1, \ldots, i-1\}} \theta_j$.
\end{enumerate}
\end{lemma}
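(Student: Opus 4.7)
The plan is to prove each clause by contradiction, writing $b := s_i(\theta_1, \ldots, \theta_i)$ and $M := \max_{j \in \{1,\ldots,i-1\}} \theta_j$ (with $M = -1$ when $i=1$), and then exhibiting a specific vector $\theta_{-i}$ against which truth-telling (or some other conforming bid) strictly beats $b$, contradicting optimality. The engine for ``strictly beats'' will be Lemma~\ref{lem:ab}: part (i) of that lemma handles the situation where one bid is losing and the truthful one is clearly winning, while part (ii) handles the mirror situation. All four clauses reduce to one of these two templates; the difference between the $i<n$ clauses and the $i=n$ clauses is merely whether we are free to choose the later components of $\theta_{-i}$ or whether $\theta_{-i}$ is already pinned down.

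For clause (i), assume $\theta_i > M$ and $b \neq \theta_i$. If $b > \theta_i$, I would choose $\theta_{i+1} \in (\theta_i, b)$ and set $\theta_j$ for $j > i+1$ to be tiny; this uses $i<n$. Then announcing $\theta_i$ makes $i$ a loser while announcing $b$ makes $i$ a clear winner, and Lemma~\ref{lem:ab}(ii) applied with $\theta'_i = b$ yields $t_i(\theta_i,\theta_{-i}) > \theta_i + t_i(b,\theta_{-i})$, contradicting the optimality of $s_i$. If $b < \theta_i$, I would symmetrically pick $\theta_{i+1} \in (b, \theta_i)$ and set the remaining later types tiny, so now truth-telling wins and $b$ loses, and Lemma~\ref{lem:ab}(i) yields the opposite contradiction. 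For clause (ii), where $i=n$ and $\theta_{-n}$ is fixed, assume $b \leq M$. Then bidding $b$ makes $n$ lose (strictly or through the lowest-index tie-break), while bidding $\theta_n$ wins because $\theta_n > M = \max_{j \neq n} \theta_j$; Lemma~\ref{lem:ab}(i) applied with $\theta'_n = b$ then directly contradicts optimality.

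Clause (iii) is the one that requires a little care. Assume $\theta_i \leq M$ and $b > M$. When $\theta_i < M$ strictly, setting all later types below $\theta_i$ gives $\max_{j\neq i} \theta_j = M \in (\theta_i, b)$, so Lemma~\ref{lem:ab}(ii) applies and announcing $\theta_i$ beats announcing $b$. The delicate subcase is $\theta_i = M$, because then $M$ no longer lies strictly above $\theta_i$ and Lemma~\ref{lem:ab}(ii) cannot be invoked with the existing types. This is exactly where the assumption $i < n$ pays off: I can choose $\theta_{i+1} \in (\theta_i, b)$ and set the remaining $\theta_j$ tiny, restoring $\max_{j\neq i} \theta_j \in (\theta_i, b)$ and yielding the required contradiction. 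Clause (iv) is the $i=n$ analogue with $\theta_{-n}$ fixed: the strict hypothesis $\theta_n < M$ already provides the chain $b > M > \theta_n$ that Lemma~\ref{lem:ab}(ii) needs, so a direct application finishes the proof.

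The main obstacle I expect is precisely the equality subcase of clause~(iii); everything else is a quick invocation of Lemma~\ref{lem:ab}. Noting where the hypothesis $i < n$ versus $i = n$ is used — namely, to ``insert'' a value of $\max_{j\neq i} \theta_j$ strictly between $\theta_i$ and $b$ when the fixed prefix $\theta_1,\ldots,\theta_{i-1}$ does not already supply one — also explains why clause~(iv) must be stated with strict inequality $\theta_i < M$, since in the case $i=n$ the later coordinates cannot be manipulated to bridge an equality.
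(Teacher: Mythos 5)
Your proof is correct and follows essentially the same route as the paper: contradiction in each clause, completing the type vector for $i<n$ so that Lemma~\ref{lem:ab}$(i)$ or $(ii)$ makes truth-telling strictly beat the deviating bid (the paper handles your two sub-cases of clause~(iii) uniformly by setting all later types to $M+\epsilon<b$, and in clause~(i) it computes the utilities directly where you cite Lemma~\ref{lem:ab}$(ii)$, but these are cosmetic differences). One trivial remark: in clause~(iii) with $\theta_i<M$ you say ``set the later types below $\theta_i$,'' which is impossible when $\theta_i=0$; setting them to $0$ (or anything at most $M$) works just as well.
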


\begin{proof}
  
\NI
$(i)$
Suppose otherwise. If $s_i(\theta_1, \LL, \theta_i) < \theta_i$, then take 
$\epsilon > 0$ such that
$s_i(\theta_1, \LL, \theta_i) + \epsilon < \theta_i$ and set
\[
\theta_{i+1} := \LL := \theta_n := s_i(\theta_1, \LL, \theta_i) + \epsilon.
\]
Then $\theta_i > \max_{j \neq i} \theta_j > s_i(\theta_1, \LL, \theta_i)$, so by Lemma \ref{lem:ab}$(i)$
\begin{align*}
&\phantom{> \ \:}  u_i((f,t)(\theta_i, \theta_{-i}), \theta_i) = \theta_i + t_i(\theta) \\
&>t_i(s_i(\theta_1,\LL, \theta_i), \theta_{-i}) = u_i((f,t)(s_i(\theta_1,\LL, \theta_i), \theta_{-i}), \theta_i).
\end{align*}
This contradicts the optimality of $s_i(\cdot)$.

If $s_i(\theta_1, \LL, \theta_i) > \theta_i$, then take $\epsilon > 0$ such that
$\theta_i + \epsilon < s_i(\theta_1, \LL, \theta_i)$ and set
\[
\theta_{i+1} := \LL := \theta_n := \theta_i + \epsilon.
\]
Then $f(\theta) \neq i$, so
$u_i((f,t)(\theta_i, \theta_{-i}), \theta_i) = r_i(\theta_{-i})$, where
$t_i(\theta) := t^{p}_i(\theta) + r_i(\theta_{-i})$. 
On the other hand $f(s_i(\theta_1,\LL, \theta_i), \theta_{-i}) = i$ and consequently
\[
u_i((f,t)(s_i(\theta_1,\LL, \theta_i), \theta_{-i}), \theta_i) = \theta_i - (\theta_i + \epsilon) + r_i(\theta_{-i})
<  r_i(\theta_{-i}).
\]
This again contradicts the optimality of $s_i(\cdot)$.
\II

\NI
$(ii)$
Suppose otherwise, that is
$s_n(\theta_1, \LL, \theta_n) \leq \max_{j \neq n} \theta_j$.
Then \\
$\textrm{argsmax} \: (s_n(\theta_1,\LL, \theta_n), \theta_{-n}) \neq n$, so 
by Lemma \ref{lem:ab}$(i)$
\begin{align*}
&\phantom{> \ \:}  u_n((f,t)(\theta_n, \theta_{-n}), \theta_n) = \theta_n + t_n(\theta) \\
&>t_n(s_n(\theta_1,\LL, \theta_n), \theta_{-n}) = u_n((f,t)(s_n(\theta_1,\LL, \theta_n), \theta_{-n}), \theta_n).
\end{align*}
This contradicts the optimality of $s_n(\cdot)$.
\II

\NI
$(iii)$
Suppose otherwise, i.e., $s_i(\theta_1, \LL, \theta_i) > \max_{j \in
  \{1, \LL, i-1\}} \theta_j$.  Then take $\epsilon > 0$ such that
$\max_{j \in \{1, \LL, i-1\}} \theta_j + \epsilon < s_i(\theta_1, \LL, \theta_i)$ and set
\[
\theta_{i+1} := \LL := \theta_n := \max_{j \in \{1, \LL, i-1\}} \theta_j + \epsilon.
\]
Then $s_i(\theta_1, \LL, \theta_i) > \max_{j \neq i} \theta_j > \theta_i$, so by Lemma \ref{lem:ab}$(ii)$
\begin{align*}
&\phantom{> \ \:}  u_i((f,t)(\theta_i, \theta_{-i}), \theta_i) = \theta_i \\
&> \theta_i + t_i(s_i(\theta_1,\LL, \theta_i), \theta_{-i}) = u_i((f,t)(s_i(\theta_1,\LL, \theta_i), \theta_{-i}), \theta_i).
\end{align*}
This contradicts the optimality of $s_i(\cdot)$.
\II

\NI
$(iv)$
Suppose otherwise, that is
$s_n(\theta_1, \LL, \theta_n) > \max_{j \neq n} \theta_j$.
Then \\
$\textrm{argsmax} \: (s_n(\theta_1,\LL, \theta_n), \theta_{-n}) = n$, so 
by Lemma \ref{lem:ab}$(ii)$

$u_n((f,t)(\theta_n, \theta_{-n}), \theta_n) = t_n(\theta) >\theta_n + t_n(s_n(\theta_1,\LL, \theta_n), \theta_{-n}) =$

$u_n((f,t)(s_n(\theta_1,\LL, \theta_n), \theta_{-n}), \theta_n)$.

This contradicts the optimality of $s_n(\cdot)$.
\end{proof}

This allows us to draw some helpful conclusions.

\begin{corollary} \label{cor:f}
In each sequential Groves auction for all $\theta \in \Theta$ and all vectors $s(\cdot)$ of optimal players' strategies

\begin{enumerate} \smallromani

\item for all $i \in \{1, \LL, n-1\}$, $\max_{j \in \{1, \ldots, i\}} [s(\cdot), \theta]_j =$ \\
$\max_{j \in \{1, \ldots, i\}} \theta_j$,

\item for all $i \in \{1, \LL, n-1\}$, ${\rm argsmax}_{j \in \{1, \ldots, i\}} [s(\cdot), \theta]_j = {\rm argsmax}_{j \in \{1, \ldots, i\}} \theta_j$,
  
\item for all $i \in \{1, \LL, n-1\}$, if 
  $\theta_{i} > \max_{j\in\{1,\ldots,i-1\}}
  [s(\cdot),\theta]_j$, then $\theta_{i} > \max_{j\in\{1,\ldots,i-1\}}
  \theta_j$ and also for any other vector of optimal players' strategies
  $s'(\cdot)$ we have $\theta_{i} > \max_{j\in\{1,\ldots,i-1\}}
  [s'(\cdot),\theta]_j$,

\item either $f([s(\cdot), \theta]) = f(\theta)$ or if not, 
$\theta_n = \max_{i \neq n} \theta_i$, 
$[s(\cdot), \theta]_n > \theta_n$ and $f([s(\cdot), \theta]) = n$.
\end{enumerate}
\end{corollary}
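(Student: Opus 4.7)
The plan is to prove parts (i) and (ii) simultaneously by induction on $i$, use part (i) immediately to get part (iii), and then do a case analysis on the last player to obtain part (iv). Throughout, write $\alpha := [s(\cdot), \theta]$ for the vector of announced types, and note that each hypothesis of Lemma~\ref{lem:vic} refers to the \emph{announced} (not received) types of the preceding players, so inductive information is exactly what we need to apply it.

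For the base case $i = 1$ of parts (i), (ii): since $n \geq 2$ we have $1 < n$, and by the convention $\max_{j \in \emptyset}\theta_j = -1$, Lemma~\ref{lem:vic}(i) gives $\alpha_1 = \theta_1$, so both statements are immediate. For the inductive step, assume (i) and (ii) hold at $i-1$; in particular $\max_{j < i}\alpha_j = \max_{j < i}\theta_j$. Split into two cases. If $\theta_i > \max_{j < i}\theta_j$, then $\theta_i > \max_{j < i}\alpha_j$ by the inductive hypothesis, so Lemma~\ref{lem:vic}(i) yields $\alpha_i = \theta_i$; hence $\max_{j \leq i}\alpha_j = \theta_i = \max_{j \leq i}\theta_j$ and both argsmaxes equal $i$. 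If $\theta_i \leq \max_{j < i}\theta_j$, then $\theta_i \leq \max_{j < i}\alpha_j$, and Lemma~\ref{lem:vic}(iii) gives $\alpha_i \leq \max_{j < i}\alpha_j$. Thus neither $\alpha_i$ nor $\theta_i$ contributes a new maximum, so both maxima and both argsmaxes are inherited from the range $j < i$, where they agree by induction.

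Part (iii) is an immediate corollary: if $\theta_i > \max_{j < i}\alpha_j$, then by part (i) applied to $s(\cdot)$, $\max_{j<i}\alpha_j = \max_{j<i}\theta_j$, giving the first conclusion; the same equality, applied to any other optimal $s'(\cdot)$, gives the second.

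For part (iv), use parts (i), (ii) at $i = n-1$ together with Lemma~\ref{lem:vic}(ii),(iv) applied to player $n$, according to how $\theta_n$ compares with $M := \max_{j < n}\theta_j = \max_{j < n}\alpha_j$. If $\theta_n > M$, Lemma~\ref{lem:vic}(ii) gives $\alpha_n > M$, so $f(\alpha) = n = f(\theta)$. If $\theta_n < M$, Lemma~\ref{lem:vic}(iv) gives $\alpha_n \leq M$, so the maximum over $\alpha$ equals the maximum over $\alpha_{j<n}$, and by part (ii) its argsmax agrees with that of $\theta_{j<n}$, which is $f(\theta)$. The remaining case is $\theta_n = M$, i.e.\ $\theta_n = \max_{i \neq n}\theta_i$; here $f(\theta) = \mathrm{argsmax}_{j<n}\theta_j$ by the lowest-index tie-breaking rule, and two sub-cases arise: if $\alpha_n \leq M$ then, arguing as above, $f(\alpha) = f(\theta)$; whereas if $\alpha_n > M = \theta_n$, then $f(\alpha) = n$, which is precisely the exceptional outcome claimed. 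The main subtlety — and the only place requiring real care — is this last tie case, where Lemma~\ref{lem:vic} does not constrain $\alpha_n$ and we must rely on the tie-breaking convention to identify the exact exception.
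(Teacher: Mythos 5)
Your proof is correct and follows essentially the same route as the paper: parts (i) and (ii) by induction on $i$ via Lemma \ref{lem:vic}, part (iii) as a direct consequence of (i), and part (iv) by a case analysis on how $\theta_n$ compares with $\max_{j<n}\theta_j$ using Lemma \ref{lem:vic}(ii),(iv) together with (i) and (ii). You merely spell out the induction and the tie-breaking details that the paper leaves as ``straightforward''.
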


Informally, items $(i)-(iii)$ state that when each player follows an optimal strategy,
the first $n-1$ of entries of $\theta$ and $[s(\cdot), \theta]$ are very similar.
In turn, item $(iv)$ states that, except when $\theta_n = \max_{i
  \neq n} \theta_i$ and player $n$ submits a larger bid, the same
outcome is realized under an arbitrary vector of optimal strategies as
under truth-telling.  This exception does not take place for the
specific optimal strategies we consider in the sequel.

\begin{proof}

\NI
$(i)$ and $(ii)$ follow by a straightforward induction using Lemma \ref{lem:vic}.
$(iii)$ is a direct consequence of $(i)$.
 To prove $(iv)$ note that if $\theta_n = \max_{i \neq n} \theta_i$ and
 $f([s(\cdot), \theta]) \neq f(\theta)$, 
 then by $(ii)$
 $[s(\cdot), \theta]_n > \theta_n$ and hence $f([s(\cdot), \theta]) = n$.
 Otherwise $\theta_n \neq \max_{j \in \{1, \ldots, n-1\}} \theta_j$ 
 and two cases arise.
 \II

 \NI
 \emph{Case 1} $\theta_n > \max_{j \in \{1, \ldots, n-1\}} \theta_j$. 

 Then by $(i)$ we have $\theta_n > \max_{j \in \{1, \ldots, n-1\}}
 [s(\theta), \theta]_j$, so by Lemma \ref{lem:vic}$(ii)$ $[s(\cdot), \theta]_n > \max_{j \in \{1, \ldots, n-1\}} [s(\cdot), \theta]_j$.
 Hence by $(i)$ and $(ii)$ we get $\textrm{argsmax} \:
 [s(\cdot), \theta] = n$ and $\textrm{argsmax} \: \theta = n$, that is
 $f([s(\cdot), \theta]) = f(\theta)$.  
 \II

 \NI
 \emph{Case 2} $\theta_n < \max_{j \in \{1, \ldots, n-1\}} \theta_j$.  

 Then by $(i)$  $\theta_n < \max_{j \in \{1, \ldots, n-1\}}
 [s(\theta), \theta]_j$, so by Lemma \ref{lem:vic}$(iv)$ $[s(\cdot),
 \theta]_n \leq \max_{j \in \{1, \ldots, n-1\}} [s(\cdot), \theta]_j$.
 Consequently $\textrm{argsmax} \: [s(\cdot), \theta] =
 \textrm{argsmax}_{j \in \{1, \ldots, n-1\}} [s(\cdot), \theta]_j$.
 Also \\
 $\textrm{argsmax} \: \theta = \textrm{argsmax}_{j \in \{1,
   \ldots, n-1\}} \theta_j$.  So by $(ii)$ we get \\
 $\textrm{argsmax} \: [s(\cdot), \theta] = \textrm{argsmax} \: \theta$,
 that is $f([s(\cdot), \theta]) = f(\theta)$. 
 \end{proof}

\section{Sequential Vickrey auctions}
\label{sec:vic}

We now focus on sequential Vickrey auctions. 
We shall need the following observation.

\begin{lemma} \label{lem:max}
Consider a sequential Vickrey auction.
For all $\theta \in \Theta$, all vectors $s(\cdot)$ of optimal players' strategies,
if $\theta_n = \max_{i \in \{1, \ldots, n-1\}} \theta_i$, then 
$u_n((f,t)([s(\cdot), \theta], \theta_n) = 0$.
\end{lemma}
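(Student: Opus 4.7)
The plan is to combine the Vickrey tax formula with Corollary \ref{cor:f}$(i)$ applied to $i = n-1$, which tells us that the maximum of the first $n-1$ announced bids coincides with the maximum of the first $n-1$ received types. By the hypothesis that $\theta_n = \max_{i \in \{1, \ldots, n-1\}} \theta_i$, this common value equals $\theta_n$. Let me abbreviate $b_n := [s(\cdot), \theta]_n$ and $M := \max_{j \in \{1, \ldots, n-1\}} [s(\cdot), \theta]_j$, so that $M = \theta_n$.

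I then split into two cases depending on the value of $b_n$. First, if $b_n > M$, then player $n$ strictly outbids every previous announcement, so $\mathrm{argsmax}\,[s(\cdot),\theta] = n$ and player $n$ wins. The second highest bid among $[s(\cdot),\theta]$ is precisely $M$, so by the Vickrey tax rule $t_n([s(\cdot),\theta]) = -M = -\theta_n$. Therefore
\[
u_n((f,t)([s(\cdot),\theta]),\theta_n) = \theta_n + t_n([s(\cdot),\theta]) = \theta_n - \theta_n = 0.
\]

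Second, if $b_n \leq M$, then some player $j \in \{1, \ldots, n-1\}$ with $[s(\cdot),\theta]_j = M$ has an index smaller than $n$, so by the tie-breaking rule player $n$ does not win. Hence $t_n([s(\cdot),\theta]) = 0$ and since player $n$ does not receive the object, $u_n = 0$ as well. In both cases $u_n((f,t)([s(\cdot),\theta]),\theta_n) = 0$, which completes the proof.

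There is no real obstacle here: the argument is essentially a two-case analysis of whether player $n$ overbids or not, and everything reduces to the fact that both the winning price and the losing payoff in Vickrey coincide with zero utility when $\theta_n$ is already the price-setting second highest bid. The only subtle point is that we do not need the optimality of $s_n$ itself; we only use optimality of the strategies of players $1, \ldots, n-1$ via Corollary \ref{cor:f}$(i)$ to pin down the value of $M$.
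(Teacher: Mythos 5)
Your proof is correct and follows essentially the same route as the paper: the paper disposes of the case $f([s(\cdot),\theta]) \neq n$ immediately (a losing player in Vickrey has zero tax and zero value) and, in the winning case, uses Corollary \ref{cor:f}$(i)$ exactly as you do to identify the second-highest announced bid with $\theta_n$, giving utility $\theta_n - \theta_n = 0$. Your added remark that only the optimality of players $1,\ldots,n-1$ is used is accurate, since Corollary \ref{cor:f}$(i)$ for $i = n-1$ depends only on those strategies.
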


\begin{proof}
It suffices to consider the case when $n = f([s(\cdot), \theta])$.
Then $[s(\cdot), \theta]_n > \max_{j \neq n} [s(\cdot), \theta]_j$, so
by Corollary \ref{cor:f}$(i)$ $[s(\cdot), \theta]^*_2 = \max_{i \neq n} \theta_i = \theta_n$.
Hence
$u_n((f,t)([s(\cdot), \theta], \theta_i) = 0$.
\end{proof}

First, we establish the following negative result.

\begin{theorem} \label{thm:dom}
Consider a sequential Vickrey auction.

\begin{enumerate} \smallromani
\item For $i \in \{1, \LL, n-1\}$ no dominant strategy exists 
for player $i$.

\item Every strategy $s_n(\cdot)$ such that
\[
        \begin{array}{l@{\extracolsep{3mm}}l}
        s_n(\theta_1, \LL, \theta_n) > \max_{j \neq n} \theta_j   & 
\mathrm{if}\  \theta_n > \max_{j \neq n} \theta_j, \\
        s_n(\theta_1, \LL, \theta_n) \leq \max_{j \neq n} \theta_j & \mathrm{otherwise}
        \end{array}
\]
is dominant (and hence rational) for player $n$.

\item For $i \in \{1, \LL, n-1\}$ no rational strategy exists 
for player $i$.

\end{enumerate}

\end{theorem}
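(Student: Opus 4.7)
The plan is to prove the three items separately, starting with the positive result in (ii) and then using it to handle the two negative claims.

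For part (ii) I would argue by a direct utility comparison. In Vickrey the winner's payment is the second-highest announced bid, so any $s_n$ that wins in exactly the cases where truth-telling would win realises the same payoff as truth-telling against every profile of earlier announcements. The stated condition is precisely this: $s_n$ bids strictly above $\max_{j\neq n}\theta_j$ iff $\theta_n$ strictly exceeds that maximum. Thus $s_n$ inherits dominance from the classical dominance of truth-telling for the last mover in a single-shot Vickrey auction.

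For part (i) I would exhibit, for every candidate $s_i$ with $i<n$, an explicit counterexample. Take the previous players' types to be $0$ and let them truth-tell, fix $\theta_i=10$ and set $b:=s_i(0,\ldots,0,10)$, let the intermediate players $i<j<n$ also truth-tell with type $0$, and construct $s_n$ adversarially: $s_n$ bids $b+1$ whenever $i$'s announced bid equals $b$, and bids $0$ otherwise. Against $s_i$ player $n$ wins and player $i$ gets utility $0$; against the alternative $s'_i$ that outputs $b+10$ on the input $(0,\ldots,0,10)$, $s_n$ bids $0$, so $i$ wins at price $0$ and obtains utility $10$. Hence $s_i$ is not dominant.

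For part (iii) the ``or if not'' clause in the inductive definition of rational strategy reduces the task to showing that no rational strategy exists for player $n-1$. Given a candidate $s_{n-1}$, take the first $n-2$ players' types to be $0$ and truth-telling, fix $\theta_{n-1}=10$, and let $b:=s_{n-1}(0,\ldots,0,10)$. I split into three cases according to the position of $b$ relative to $10$. If $b>10$, pick $\theta_n\in(10,b)$ and take $s_n$ to be truth-telling (rational by (ii)): $s_{n-1}$ makes $n-1$ win at price $\theta_n>10$ with utility $10-\theta_n<0$, whereas $s'_{n-1}$ bidding $0$ makes $n$ win, giving $n-1$ utility $0$. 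If $b<10$, pick $\theta_n\in(b,10)$ and again use truth-telling $s_n$: now $s_{n-1}$ makes $n$ win and $n-1$ obtain $0$, while $s'_{n-1}$ bidding $10$ makes $n-1$ win at price $\theta_n$, giving utility $10-\theta_n>0$. The remaining case $b=10$ uses $\theta_n=5$ and the rational strategy $s_n$ that bids $\theta_n$ when $\theta_n>\max_{j\neq n}\theta_j$ and otherwise bids $\max(0,\max_{j\neq n}\theta_j-1)$; then $s_{n-1}$ yields utility $10-9=1$ while $s'_{n-1}$ bidding $5$ yields $10-4=6$.

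The main obstacle I expect is the $b=10$ case of (iii), where the winning/losing structure is identical under $s_{n-1}$ and $s'_{n-1}$, so one must construct an explicit rational $s_n$ that extracts strictly more money when the announced maximum is higher. The ``$\max-1$'' penalty above does exactly this, and is rational since its bid lies at or below $\max_{j\neq n}\theta_j$ whenever $\theta_n$ does, and strictly above it otherwise.
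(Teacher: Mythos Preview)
Your proposal is correct; all three parts go through as written. The route, however, differs from the paper's in an instructive way.

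For (i), the paper first observes that a dominant strategy is in particular optimal, and then invokes Lemma~\ref{lem:vic}$(i)$ to conclude that any such $s_i$ must bid truthfully on the relevant input, i.e.\ $b=\theta_i$. With $b$ pinned down, a single adversarial profile (the ``$\max-1$'' strategy for the later movers) suffices. You instead treat $b$ as an arbitrary value and build an $s_n$ tailored to that specific $b$; this avoids any appeal to Lemma~\ref{lem:vic} and is entirely self-contained.

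For (iii), the paper again first argues that a rational $s_{n-1}$ is optimal (by plugging in $s_n=\pi_n$ and $s'_{n-1}=\pi_{n-1}$ into the definition of rationality and comparing with optimality of $\pi_{n-1}$), which via Lemma~\ref{lem:vic}$(i)$ forces $b=\theta_{n-1}$ and reduces everything to the single case you label $b=10$. Your three-case split on $b$ versus $\theta_{n-1}$ sidesteps this reduction: the cases $b<10$ and $b>10$ are handled with truth-telling $s_n$, and only $b=10$ needs the ``$\max-1$'' construction---which is exactly the paper's adversarial strategy restricted to player $n$.

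What each approach buys: the paper's argument is shorter and reuses the structural lemma that is needed elsewhere in the paper anyway; yours is more elementary and shows that the negative results do not actually depend on the finer characterisation of optimal strategies in Lemma~\ref{lem:vic}.
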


\begin{proof}

\NI
$(i)$
Suppose otherwise. Let $s_i(\cdot)$ be a dominant strategy of player $i$. 
In particular $s_i(\cdot)$ is optimal. Choose now $\theta \in \Theta$ such that
$\theta_i = 2$ and $\theta_j = 0$ for $j \neq i$.
By Lemma \ref{lem:vic}$(i)$ $s_i(\theta_1, \LL, \theta_i) = \theta_i$.
Take now truth-telling as  strategy for players $j < i$ and
the following strategy for players $j > i$:

$s_j(\theta_1, \LL, \theta_j) :=$
\[
 \left\{
        \begin{array}{l@{\extracolsep{3mm}}l}
        \theta_j    & \mathrm{if}\  \theta_j > \max_{k \in \{1, \ldots, j-1\}} \theta_k, \\
        \max \{\max_{k \in \{1, \ldots, j-1\}} \theta_k- 1, 0\}  & \mathrm{otherwise}.
        \end{array}
        \right.
\]

Then 
$u_{i}((f,t)([(s_{i}(\cdot), s_{-i}(\cdot)), \theta]), \theta_i) = 1$,
while strategy $s_i(\cdot)$ such that 
$s_i(\theta_1, \LL, \theta_i) = 1$ yields player $i$ in this case the final utility $2$.
\II

\NI
$(ii)$
It suffices to recall that for player $n$ the concepts of
dominant, rational and optimal strategies coincide and apply Lemma \ref{lem:vcg1}.
\II

\NI
$(iii)$
It suffices to prove the claim for $i = n-1$.
Suppose by contradiction that a rational strategy $s_{n-1}(\cdot)$ of
player $n-1$ exists.  By $(ii)$ the truth-telling strategy
$\pi_n(\cdot)$ is a rational strategy for player $n$.  By the
definition of a rational strategy, using $s'_{n-1}(\cdot) =
\pi_{n-1}(\cdot)$ and $s_n(\cdot) = \pi_n(\cdot)$, we have for all
$\theta \in \Theta$
\begin{align*}
&\phantom{\geq \ \:}  u_{n-1}((f,t)(s_{n-1}(\theta_1,\LL, \theta_{n-1}), \theta_{-{(n-1)}}), \theta_{n-1}) \\
&\geq u_{n-1}((f,t)(\theta), \theta_{n-1}).
\end{align*}

By Lemma \ref{lem:vcg1} $\pi_{n-1}(\cdot)$ is an optimal strategy for
player $n-1$, so we have for all $\theta \in \Theta$ and all $\theta'_i \in
\Theta_i$
\[
u_{n-1}((f,t)(\theta), \theta_{n-1}) \geq
u_{n-1}((f,t)(\theta'_{n-1}, \theta_{-{(n-1)}}), \theta_{n-1}).
\]

The above two inequalities imply that $s_{n-1}(\cdot)$ is optimal.
It suffices now to take $i = n-1$ and repeat the proof of $(i)$ and note that strategy 
$s_n(\cdot)$ defined there is rational for player $n$.
\end{proof}

So we shall focus on the weaker notion of optimal strategy.
The following natural strategy for player $i$ is an example of an optimal strategy
that deviates from truth-telling:
\begin{equation}
\label{equ:opt}
s_i(\theta_1, \LL, \theta_i) :=
 \left\{
        \begin{array}{l@{\extracolsep{3mm}}l}
        \theta_i    & \mathrm{if}\  \theta_i > \max_{j \in \{1, \ldots, i-1\}} \theta_j, \\
        0  & \mathrm{otherwise}.
        \end{array}
        \right.
\end{equation}

Note that strategy $s_i(\cdot)$ is indeed optimal, since it does not change
the decision that would be taken if player $i$ is truthful, and hence
Lemma~\ref{lem:vcg1} can be applied.
When we limit our attention to optimal strategies we get the following result.

\begin{theorem}
\label{thm:dom-opt}
Consider a sequential Vickrey auction.
For all $\theta \in \Theta$, all vectors $s(\cdot)$ of optimal players' strategies
and all optimal strategies $s'_i(\cdot)$ of player $i$
\[
u_i((f,t)([s(\cdot), \theta], \theta_i) =
u_i((f,t)([(s'_i(\cdot), s_{-i}(\cdot)), \theta], \theta_i).
\]
\end{theorem}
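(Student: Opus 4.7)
The plan is to perform a case analysis on the relationship between $\theta_i$ and $M := \max_{j \in \{1,\ldots,i-1\}}\theta_j$, exploiting the special form of the Vickrey payment rule (a loser pays nothing; a winner pays the second-highest bid). Writing $\tau := [s(\cdot), \theta]$ and $\tau' := [(s'_i(\cdot), s_{-i}(\cdot)), \theta]$, I note that the strategies of players $j < i$ depend only on announced bids common to both runs, so $\tau_j = \tau'_j$ for every $j < i$, and by Corollary \ref{cor:f}(i) also $\max_{j<i}\tau_j = \max_{j<i}\tau'_j = M$.

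In the ``clear winner'' subcase $\theta_i > M$ with $i < n$, Lemma \ref{lem:vic}(i) forces $\tau_i = \tau'_i = \theta_i$, which then propagates through the deterministic strategies of players $j > i$ to give $\tau = \tau'$ and hence equal utilities for free. If instead $i = n$, Lemma \ref{lem:vic}(ii) ensures that both $\tau_n$ and $\tau'_n$ strictly exceed $M$, so player $n$ wins in both runs and the second-highest bid is exactly $M$; hence the utility is $\theta_n - M$ on both sides. In the ``clear loser'' subcases $\theta_i \leq M$ with $i < n$, or $\theta_n < M$, Lemma \ref{lem:vic}(iii) or (iv) bounds the bid of player $i$ by $M$, and since $\max_{j<i}\tau_j = M$ is attained by some $j < i$, the tie-breaking rule (lowest index wins) forces player $i$ to lose in both runs; the utility is therefore $0$ in both cases.

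The remaining edge case is $i = n$ with $\theta_n = M$; here the choice of optimal strategy can genuinely affect whether player $n$ wins, so pinning down the outcome is no longer possible. However, this case is exactly the hypothesis of Lemma \ref{lem:max}, which yields utility $0$ for any optimal strategy of player $n$ and thus closes the gap. The main subtle point is recognizing this edge case (where the conclusion does not follow from pinning down the bids) and invoking Lemma \ref{lem:max} for it; everything else is a mechanical case analysis driven by Lemma \ref{lem:vic} combined with the Vickrey payment structure.
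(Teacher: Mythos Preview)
Your proof is correct and uses the same toolkit as the paper (Lemma~\ref{lem:vic}, Corollary~\ref{cor:f}, and Lemma~\ref{lem:max}), but the case decomposition differs. The paper splits on whether $f(\theta)=i$ and then on $i<n$ versus $i=n$; when $f(\theta)\neq i$ it invokes Corollary~\ref{cor:f}$(iv)$ to argue that $i$ cannot become the winner under optimal strategies (reducing eventually to Lemma~\ref{lem:max}). You instead split on $\theta_i$ versus $M=\max_{j<i}\theta_j$, which matches the hypotheses of Lemma~\ref{lem:vic} directly, and in the ``loser'' branch you argue via the tie-breaking rule that some $j_0<i$ with $\tau_{j_0}=M$ already blocks $i$ from being $\mathrm{argsmax}$. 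This is slightly more elementary than the paper's route (it avoids appealing to Corollary~\ref{cor:f}$(iv)$ in the $i<n$ loser case), while the paper's $f(\theta)$-based split makes the connection to the final decision more explicit. Both arrive at the same edge case $i=n$, $\theta_n=M$ handled by Lemma~\ref{lem:max}.
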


This can be interpreted as a statement that each optimal strategy is dominant within the
universe of optimal strategies. 

\begin{proof}
\II

\NI
\emph{Case 1} $f(\theta)=i$. 

If $i < n$, then by Lemma \ref{lem:vic}$(i)$ and Corollary \ref{cor:f}$(i)$
we have
$
[s(\cdot), \theta]_i = \theta_i = [(s'_i(\cdot), s_{-i}(\cdot)), \theta]_i,
$
so $[s(\cdot), \theta] = [(s'_i(\cdot), s_{-i}(\cdot)), \theta]$
and the desired equality holds.

If $i = n$, then for $j = 1, \LL, n-1$ we have
$[s(\cdot), \theta]_j = [(s'_n(\cdot), s_{-n}(\cdot)), \theta]_j$.
Hence
$[s(\cdot), \theta]^*_2 = [(s'_n(\cdot), s_{-n}(\cdot)), \theta]^*_2$.
Additionally $\theta_n \neq \max_{i \in \{1, \ldots, n-1\}} \theta_i$ (otherwise $f(\theta) \neq n$),
so by Corollary \ref{cor:f}$(iv)$
$f([s(\cdot), \theta]) = f([(s'_i(\cdot), s_{-i}(\cdot)), \theta]) = i$ and consequently
\begin{align*}
&\phantom{= \ \:}  u_i((f,t)([s(\cdot), \theta], \theta_i) = \theta_i - [s(\cdot), \theta]^*_2 \\
&=\theta_i - [(s'_i(\cdot), s_{-i}(\cdot)), \theta]^*_2 = u_i((f,t)([(s'_i(\cdot), s_{-i}(\cdot)), \theta], \theta_i).
\end{align*}

\II

\NI
\emph{Case 2} $f(\theta) \neq i$. 

If both $f([s(\cdot), \theta]) \neq i$ and $f([(s'_i(\cdot),
s_{-i}(\cdot)), \theta]) \neq i$, then both $u_i((f,t)([s(\cdot),
\theta], \theta_i) = 0$ and $u_i((f,t)([(s'_i(\cdot), s_{-i}(\cdot)),
\theta], \theta_i) = 0$.  

Otherwise, $f([s(\cdot), \theta]) \neq f(\theta)$ or $f([(s'_i(\cdot),
s_{-i}(\cdot)), \theta]) \neq f(\theta)$, so by Corollary \ref{cor:f}$(iv)$
$\theta_n = \max_{i \in \{1, \ldots, n-1\}} \theta_i$. 

If $i < n$, by symmetry, the only case to consider
is when $f([s(\cdot), \theta]) = i$ and $f([(s'_i(\cdot), s_{-i}(\cdot)), \theta]) \neq i$.
Then $f([s(\cdot), \theta]) \neq f(\theta)$, so $f([s(\cdot), \theta]) = n$, so this case cannot arise.
If $i = n$, the desired equality holds by Lemma \ref{lem:max}.
\end{proof}

This shows that from the point of view of each player all optimal
strategies are equivalent (assuming that each player has at his
disposal only optimal strategies). However, optimal strategies may
differ when the players take into account the utility of other
players, in particular, the social welfare.  In \cite{AE08} it was
proved that in the well-known case of the public project problem (see,
e.g., \cite[page 861]{MWG95}) socially maximal strategies do not
exist. (It was also showed there that socially optimal strategies do
exist.)  The following result shows that in the case of the sequential
Vickrey auctions the situation changes and that strategy $s_i(\cdot)$
defined in (\ref{equ:opt}) plays then a special role.

\begin{theorem} \label{thm:max}
In the sequential Vickrey auction
strategy $s_i(\cdot)$ defined in (\ref{equ:opt}) is socially maximal for player $i$.
\end{theorem}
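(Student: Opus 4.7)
The plan is to verify the two conditions in the definition of socially maximal: first, that $s_i(\cdot)$ from (\ref{equ:opt}) is optimal, and second, that for every other optimal $s'_i(\cdot)$, every $\theta \in \Theta$ and every $j \neq i$, the utility of $j$ under $s_i$ weakly exceeds the utility of $j$ under $s'_i$ when all players other than $i$ bid their true types. Optimality is already noted after (\ref{equ:opt}) via Lemma \ref{lem:vcg1}, so the real work lies in the pointwise comparison. I would split on the sign of $\theta_i - \max_{k \in \{1,\ldots,i-1\}}\theta_k$, and for $i=n$ refine further by whether this quantity is strictly positive, strictly negative, or zero.

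If $\theta_i > \max_{k \in \{1,\ldots,i-1\}}\theta_k$ then $s_i(\theta_1,\LL,\theta_i) = \theta_i$. For $i < n$, Lemma \ref{lem:vic}(i) forces every optimal $s'_i$ to equal $\theta_i$ as well, so the two announced bid vectors are identical and the utilities of all other players coincide trivially. For $i = n$, Lemma \ref{lem:vic}(ii) yields $s'_n > \max_{k \in \{1,\ldots,n-1\}}\theta_k = \max_{j \neq n}\theta_j$, so player $n$ is the unique winner in both runs and the second-highest bid is exactly $\max_{k \in \{1,\ldots,n-1\}}\theta_k$ regardless of $s'_n$; every other player then receives utility $0$ in both runs.

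If $\theta_i \leq \max_{k \in \{1,\ldots,i-1\}}\theta_k$ then $s_i(\theta_1,\LL,\theta_i) = 0$. By Lemma \ref{lem:vic}(iii) for $i < n$, and by Lemma \ref{lem:vic}(iv) for $i = n$ under $\theta_n < \max_{k \in \{1,\ldots,n-1\}}\theta_k$, any optimal $s'_i$ satisfies $s'_i \leq \max_{k \in \{1,\ldots,i-1\}}\theta_k$; combined with the lowest-index tiebreaking rule this makes $s'_i$ a losing bid, and the winner in both the $s_i$ and $s'_i$ runs is the same player $j^* = f(\theta)$. Since in a Vickrey auction $j^*$ pays the second-highest submitted bid, and passing from $0$ to $s'_i \geq 0$ can only weakly increase that second-highest value, $j^*$ is weakly better off under $s_i$, while every other non-winner obtains utility $0$ in both runs.

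The one residual subcase, and the main obstacle, is $i = n$ with $\theta_n = \max_{k \in \{1,\ldots,n-1\}}\theta_k$, where Lemma \ref{lem:vic} gives no bound on $s'_n$. I would handle this directly by first observing that player $n$'s own utility is identically $0$ no matter what he bids (he either ties and loses the tiebreaker, or overbids strictly and pays exactly $\theta_n$), so every strategy of player $n$ is vacuously optimal. If $s'_n \leq \max_{k \in \{1,\ldots,n-1\}}\theta_k$ the preceding paragraph still applies; if instead $s'_n > \max_{k \in \{1,\ldots,n-1\}}\theta_k$, then under $s'_n$ player $n$ wins and every other player receives $0$, whereas under $s_n = 0$ some $j^* < n$ with $\theta_{j^*} = \max_{k \in \{1,\ldots,n-1\}}\theta_k$ wins and collects $\theta_{j^*}$ minus a second-highest bid that cannot exceed $\theta_{j^*}$, i.e., a non-negative payoff, while all other $j \neq n, j^*$ still receive $0$. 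In either possibility every player $j \neq n$ is weakly better off under $s_n$, completing the verification.
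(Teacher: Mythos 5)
Your proof is correct and follows essentially the same route as the paper's: optimality comes from Lemma \ref{lem:vcg1}, Lemma \ref{lem:vic} pins down what any rival optimal strategy can bid at the given history, and the winning player's payment is minimized because $-{(0, \theta_{-i})}^*_{2} \geq -{(\theta'_i, \theta_{-i})}^*_{2}$ for all $\theta'_i \geq 0$. The only cosmetic difference is that you resolve the exceptional tie case $i=n$, $\theta_n=\max_{k<n}\theta_k$ by direct computation of the utilities, where the paper routes the case split through Corollary \ref{cor:f}(iv) and invokes Lemma \ref{lem:max}; the substance is identical.
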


\begin{proof}
We noted already that by virtue of Lemma \ref{lem:vcg1} strategy $s_i(\cdot)$ defined by (\ref{equ:opt}) is optimal.
We also need to prove that for all optimal strategies $s'_i(\cdot)$ of player $i$, all 
$\theta \in \Theta$  and all $j \neq i$
\begin{equation}
  \label{equ:max}
u_j((f,t)(s_i(\theta_1,\LL, \theta_i), \theta_{-i}), \theta_j) \geq \\
u_j((f,t)(s'_i(\theta_1,\LL, \theta_i), \theta_{-i}), \theta_j).
\end{equation}

Fix $\theta \in \Theta$ and $j \neq i$.
Consider Corollary \ref{cor:f}$(iv)$ with $s_j(\cdot) := \pi_j(\cdot)$ for $j \neq i$. Then either
\[
f(s_i(\theta_1,\LL, \theta_i), \theta_{-i}) = f(s'_i(\theta_1,\LL, \theta_i), \theta_{-i}) = f(\theta)
\]
or $\theta_n = \max_{i \neq n} \theta_i$.

Consider first the former case.
If $f(\theta) \neq j$, then
\begin{align*}
&\phantom{= \ \:}  u_j((f,t)(s_i(\theta_1,\LL, \theta_i), \theta_{-i}), \theta_j) \\
&= u_j((f,t)(s'_i(\theta_1,\LL, \theta_i), \theta_{-i}), \theta_j) = 0.
\end{align*}
Otherwise 
\[
u_j((f,t)(s_i(\theta_1,\LL, \theta_i), \theta_{-i}), \theta_j) = \theta_j - {(s_i(\theta_1,\LL, \theta_i), \theta_{-i})}^*_{2}
\]
and
\[
u_j((f,t)(s'_i(\theta_1,\LL, \theta_i), \theta_{-i}), \theta_j) = \theta_j - {(s'_i(\theta_1,\LL, \theta_i), \theta_{-i})}^*_{2}.
\]
Since $f(\theta) \neq j$, either contingency $(i)$ or contingencies
$(iii)$ or $(iv)$ of Lemma \ref{lem:vic} apply.  In the first case
$s_i(\theta_1, \LL, \theta_i) = s'_i(\theta_1, \LL, \theta_i) =
\theta_i$, so (\ref{equ:max}) holds.  In the second case
$s_i(\theta_1,\LL, \theta_i) = 0$ and for all $\theta'_i \geq 0$ we
have $- {(0, \theta_{-i})}^*_{2} \geq - {(\theta'_i, \theta_{-i})}^*_{2}$,
so (\ref{equ:max}) holds, as well.

In the latter case note that by Corollary \ref{cor:f}$(iv)$ we have
$f(s_i(\theta_1,\LL, \theta_i), \theta_{-i}) = f(\theta)$ since $[(s_i(\cdot), (\pi_{-i}(\cdot))), \theta]_n = 0 \leq \theta_n$.
So $f(s'_i(\theta_1,\LL, \theta_i), \theta_{-i}) \neq f(\theta)$ and
$[(s'_i(\cdot), (\pi_{-i}(\cdot))), \theta]_n > \theta_n$.
Hence $f(s'_i(\theta_1,\LL, \theta_i), \theta_{-i}) = n$ and by 
Lemma \ref{lem:max} for all $j \in \{1, \LL, n\}$ we have
$u_j((f,t)(s'_i(\theta_1,\LL, \theta_i), \theta_{-i}), \theta_j) = 0$, so 
(\ref{equ:max}) holds.
\end{proof}

Finally, we show that when each player $i$ follows strategy $s_i(\cdot)$
of Theorem \ref{thm:max}, maximal social welfare 
is generated.

\begin{theorem} \label{thm:vic-opt}
In the sequential Vickrey auction
for all $\theta \in \Theta$ and vectors $s'(\cdot)$ 
of optimal players' strategies,
\[
SW(\theta, s(\cdot)) \geq SW(\theta, s'(\cdot))
\]
where $s(\cdot)$ is the vector of strategies $s_i(\cdot)$ defined in (\ref{equ:opt}).
\end{theorem}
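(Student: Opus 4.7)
In a Vickrey auction only the winner pays, and the payment equals the second-highest announced bid, so for any announcement vector $\hat\theta$ the social welfare reduces to $\theta_{f(\hat\theta)} - \hat\theta^*_2$. Thus $SW$ is increased either by assigning the object to a player with a higher type or by shrinking the second-highest submitted bid, so it is enough to show that $s(\cdot)$ selects the same winner as truth-telling while driving the second-highest bid down as far as Corollary~\ref{cor:f} permits. Write $\hat\theta := [s(\cdot),\theta]$ and $\hat\theta' := [s'(\cdot),\theta]$, and let $k := f(\theta)$; the plan is to split on $f(\hat\theta')$ using Corollary~\ref{cor:f}$(iv)$.

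Under (\ref{equ:opt}) player $n$ never bids strictly above $\theta_n$, so the exceptional clause of Corollary~\ref{cor:f}$(iv)$ is ruled out and $f(\hat\theta) = k$. Combined with Corollary~\ref{cor:f}$(i)$, a short induction through (\ref{equ:opt}) shows that $\hat\theta_j = \theta_j$ exactly when $j$ is a strict running maximum of $\theta$ (which automatically includes $j = k$) and $\hat\theta_j = 0$ otherwise. Since the sequence of strict running maxima is increasing and terminates at $k$, this yields $\hat\theta^*_2 = \max_{j < k}\theta_j$ when $k \geq 2$ and $\hat\theta^*_2 = 0$ when $k = 1$; in both cases $SW(\theta, s(\cdot)) = \theta_k - \hat\theta^*_2$.

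For the main subcase $f(\hat\theta') = k$, the claim reduces to $(\hat\theta')^*_2 \geq \hat\theta^*_2$. This is immediate when $k = 1$ from non-negativity of types. When $2 \leq k \leq n-1$, Corollary~\ref{cor:f}$(i)$ with $i = k-1$ gives $\max_{j \leq k-1}\hat\theta'_j = \max_{j \leq k-1}\theta_j = \hat\theta^*_2$; as all these indices differ from $k$, this quantity is bounded above by $(\hat\theta')^*_2$. When $k = n$, the same corollary with $i = n-1$ gives $\max_{j \leq n-1}\hat\theta'_j = \theta^*_2 = \hat\theta^*_2$, again bounded by $(\hat\theta')^*_2$.

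In the remaining subcase $f(\hat\theta') \neq k$, Corollary~\ref{cor:f}$(iv)$ forces $\theta_n = \max_{i \neq n}\theta_i$, $\hat\theta'_n > \theta_n$, and $f(\hat\theta') = n$; the tie-breaking rule then forces $k < n$ and $\theta_k = \theta_n = \theta^*_2$. Corollary~\ref{cor:f}$(i)$ with $i = n-1$ gives $(\hat\theta')^*_2 \geq \max_{j \leq n-1}\hat\theta'_j = \theta_k$, so $SW(\theta, s'(\cdot)) = \theta_n - (\hat\theta')^*_2 \leq 0$, while $SW(\theta, s(\cdot)) = \theta_k - \hat\theta^*_2 \geq 0$ because $\hat\theta^*_2 \leq \theta^*_2 = \theta_k$. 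The one place requiring real care is aligning the corner cases $k = 1$ and $k = n$ with the index range $i \leq n-1$ of Corollary~\ref{cor:f}$(i)$, but the case split above absorbs both cleanly.
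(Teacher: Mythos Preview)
Your proof is correct, and it takes a genuinely different route from the paper's.

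The paper first invokes Lemma~\ref{lem:optn}, which requires $s_n(\cdot)$ to be socially optimal (a fact it obtains from Theorem~\ref{thm:max}), in order to replace the last coordinate of $s'(\cdot)$ by $s_n(\cdot)$. It then establishes the coordinatewise inequality $[s(\cdot),\theta]_i \leq [(s'_{-n}(\cdot),s_n(\cdot)),\theta]_i$ for all $i \leq n-1$ by induction using Lemma~\ref{lem:vic}, and handles the tie case $\theta_n = \max_{j \neq n}\theta_j$ via Lemma~\ref{lem:max}. You instead compute $[s(\cdot),\theta]^*_2$ explicitly as $\max_{j<k}\theta_j$ (the penultimate strict running maximum) and bound $[s'(\cdot),\theta]^*_2$ from below by the same quantity directly from Corollary~\ref{cor:f}$(i)$, treating the exceptional clause of Corollary~\ref{cor:f}$(iv)$ by a bare-hands computation.

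Your argument is shorter and self-contained: it needs neither Lemma~\ref{lem:optn}, nor Lemma~\ref{lem:max}, nor the social maximality result of Theorem~\ref{thm:max}. What the paper's approach buys in exchange is the stronger intermediate fact that $[s(\cdot),\theta]$ is coordinatewise minimal among announcement vectors arising from optimal strategies (once the last coordinate is fixed), which is the natural analogue of Claim~\ref{thirdhighest} in the BC analysis and has some independent interest.
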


\begin{proof}
Fix $\theta \in \Theta$. By Lemma \ref{lem:optn} it suffices to prove that
for all vectors $s'_{-n}(\cdot)$ of socially optimal strategies for players $1, \LL, n-1$
\[
SW(\theta, s(\cdot)) \geq SW(\theta, (s'_{-n}(\cdot), s_n(\cdot))),
\]
where $s(\cdot)$ is the vector of strategies $s_i(\cdot)$ of Theorem \ref{thm:max}. 

Fix a vector $s'_{-n}(\cdot)$ of socially optimal strategies for players $1, \LL, n-1$.
By Corollary \ref{cor:f}$(iv)$ either for some $j$ we have
$f([s(\cdot), \theta]) = f([(s'_{-n}(\cdot), s_n(\cdot)), \theta]) = j$ or
$\theta_n = \max_{i \in \{1, \ldots, n-1\}} \theta_i$.

In the former case
\[
SW(\theta, s(\cdot)) = u_j((f,t)([s(\cdot), \theta]), \theta_j) = \theta_j - {[s(\cdot), \theta]}^*_2
\]
and
\begin{align*}
&\phantom{= \ \:}  SW(\theta, (s'_{-n}(\cdot), s_n(\cdot))) = u_j((f,t)([(s'_{-n}(\cdot), s_n(\cdot)), \theta]), \theta_j) \\
&= \theta_j - {[(s'_{-n}(\cdot), s_n(\cdot)), \theta]}^*_2.
\end{align*}

Further, a straightforward proof by induction using Corollary \ref{cor:f}$(i)$ and
Lemma \ref{lem:vic}$(i)$ and $(iii)$ shows that for all $i \in \{1, \LL, n-1\}$, 
$[s(\cdot), \theta]_i \leq [(s'_{-n}(\cdot), s_n(\cdot)), \theta]_i$.
Hence ${[s(\cdot), \theta]}^*_2 \leq {[(s'_{-n}(\cdot), s_n(\cdot)), \theta]}^*_2$, so 

$SW(\theta, s(\cdot)) \geq SW(\theta, (s'_{-n}(\cdot), s_n(\cdot)))$.

In the latter case $f([s(\cdot), \theta]) = f(\theta)$ so $f([(s'_{-n}(\cdot), s_n(\cdot)), \theta]) \neq f(\theta)$
and hence by Corollary \ref{cor:f}$(i)$ $f([(s'_{-n}(\cdot), s_n(\cdot)), \theta]) = n$, so
by Lemma \ref{lem:max}
\[
SW(\theta, (s'_{-n}(\cdot), s_n(\cdot))) = u_n((f,t)([(s'_{-n}(\cdot), s_{n}(\cdot)), \theta], \theta_n) =0,
\]
so the desired inequality holds.
\end{proof}

This maximal final social welfare under $s(\cdot)$
equals 
\[
SW(\theta, s(\cdot)) = \theta_i - \max_{i \in \{1, \ldots, i-1\}} \theta_i,
\] 
where $i = \textrm{argsmax} \: \theta$. This is always greater than or equal to the final
social welfare achieved in a Vickrey auction when players bid truthfully, which is $\theta_i -
\max_{i \neq j} \theta_i$.
Additionally, for some inputs, for
instance those of the form $\theta^{*}$, with the first three entries
different, it is strictly greater.

\section{Sequential BC auctions}
\label{sec:bc}

Next, we consider sequential Bailey-Cavallo auctions.  We first show
that in analogy to the sequential Vickrey auctions no dominant
strategies exist except for the last player. In fact we establish this
for a wide class of Groves auctions. 

\begin{theorem}\label{thm:bcdom}
  Consider a sequential Groves auction such that the redistribution
  function $r$ satisfies: $0\leq r_i(\theta_{-i}) <
  {(\theta_{-i})}^*_1$ for all $i$.
\begin{enumerate} \smallromani
\item For $i \in \{1, \LL, n-1\}$ no dominant strategy exists 
for player $i$.
\item Every strategy $s_n(\cdot)$ such that
\[
        \begin{array}{l@{\extracolsep{3mm}}l}
        s_n(\theta_1, \LL, \theta_n) > \max_{j \neq n} \theta_j   & 
\mathrm{if}\  \theta_n > \max_{j \neq n} \theta_j, \\
        s_n(\theta_1, \LL, \theta_n) \leq \max_{j \neq n} \theta_j & \mathrm{otherwise}
        \end{array}
\]
%
is dominant for player $n$.
\item For $i \in \{1, \LL, n-1\}$ no rational strategy exists 
for player $i$.
\end{enumerate}
\end{theorem}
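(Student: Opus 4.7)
The plan is to adapt the proof of Theorem \ref{thm:dom} (the Vickrey analog), exploiting the hypothesis $0 \leq r_i(\theta_{-i}) < (\theta_{-i})^*_1$ to absorb the extra redistribution term that distinguishes a general Groves auction from Vickrey.

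For part $(ii)$, the argument is essentially identical to its Vickrey counterpart: for player $n$ the notions of dominant, rational, and optimal strategies coincide, and the strategy $s_n$ described in the statement manifestly preserves the decision $f(\theta)$ (the winner's identity is unchanged in both cases of the definition), so Lemma \ref{lem:vcg1} gives optimality and hence dominance.

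For part $(i)$, I would argue by contradiction. If $s_i(\cdot)$ is dominant, it is optimal, so applying Lemma \ref{lem:vic}$(i)$ to the type profile with $\theta_i = 2$ and $\theta_j = 0$ for $j \neq i$ forces $s_i(\theta_1, \LL, \theta_i) = 2$. I would then use truth-telling for players before $i$ and, for each player $j > i$, the same piecewise strategy used in the proof of Theorem \ref{thm:dom}$(i)$: bid $\theta_j$ when it is a strict new maximum, else $\max\{\max_{k<j}\theta_k - 1,\,0\}$. Under these opponent strategies, $s_i$ bidding $2$ produces the announcement vector $(0,\LL,0,2,1,\LL,1)$, while bidding $1$ instead produces $(0,\LL,0,1,0,\LL,0)$; in both cases player $i$ wins. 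A direct computation shows that the utility difference between bidding $1$ and bidding $2$ equals $1 + r_i(0,\LL,0,0,\LL,0) - r_i(0,\LL,0,1,\LL,1)$. The hypothesis yields $r_i(0,\LL,0,1,\LL,1) < (\theta_{-i})^*_1 = 1$ and $r_i(0,\LL,0,0,\LL,0) \geq 0$, so this difference is strictly positive, contradicting dominance.

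For part $(iii)$, I would mirror the Vickrey proof: it suffices to consider $i = n-1$. Assume a rational $s_{n-1}(\cdot)$ exists. Since $\pi_n(\cdot)$ satisfies the condition of $(ii)$, it is rational for player $n$; using $s'_{n-1}(\cdot) = \pi_{n-1}(\cdot)$ and $s_n(\cdot) = \pi_n(\cdot)$ in the definition of rationality, combined with the optimality of $\pi_{n-1}(\cdot)$ (Lemma \ref{lem:vcg1}), forces $s_{n-1}(\cdot)$ to be optimal. The construction of $(i)$ then applies with $i = n-1$, and one only needs to observe that the auxiliary strategy assigned to player $n$ in that construction satisfies the condition of $(ii)$, so it is rational. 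The main obstacle is the subtlety in $(i)$: the redistribution $r_i(\theta_{-i})$ depends on $\theta_{-i}$, which differs between the two bidding scenarios, so the Vickrey calculation cannot be quoted directly; the hypothesis $0 \leq r_i(\theta_{-i}) < (\theta_{-i})^*_1$ is precisely what is needed to dominate this discrepancy by the unit gain in the pivotal payment.
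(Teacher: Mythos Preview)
Your proposal is correct and follows exactly the approach the paper indicates: the paper explicitly states that ``The proof is based on the same arguments as the proof of Theorem~\ref{thm:dom} and is omitted,'' and your adaptation of that proof---using the bounds $0 \leq r_i(\theta_{-i})$ and $r_i(\theta_{-i}) < (\theta_{-i})^*_1$ to control the additional redistribution terms in the utility comparison of part~$(i)$---is precisely the intended argument. Your handling of $(ii)$ via Lemma~\ref{lem:vcg1} and of $(iii)$ by reduction to $(i)$ via the optimality of $\pi_{n-1}$ and rationality of the auxiliary $s_n$ strategy also matches the structure of the Vickrey proof.
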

The proof is based on the same arguments as the proof of Theorem~\ref{thm:dom}.
and is omitted.




We shall thus focus, as in the case of sequential Vickrey auctions, on the weaker notion of optimal strategy.
We have various natural optimal strategies that deviate from truth-telling, such as the following one:
\begin{equation}
\label{dominant2}
s_i(\theta_1,\ldots,\theta_i) := \left\{\begin{array}{ll}
                   \theta_i & \mbox{if } \theta_i > \max_{j\in\{1,\ldots,i-1\}} \theta_j \\
                   {(\theta_1,\ldots,\theta_{i-1})}^{*}_1 & \mbox{if } \theta_i \leq \max_{j\in\{1,\ldots,i-1\}} \theta_j \\
                                                        & \mbox{and } i\leq n-1 \\                   
                   {(\theta_1,\ldots,\theta_{i-1})}^{*}_2 &\mbox{otherwise}
                   \end{array}
             \right. 
\end{equation}

According to strategy $s_i(\cdot)$ if player $i$ cannot be a winner when bidding truthfully 
($\theta_i \leq \max_{j\in\{1,\ldots,i-1\}} \theta_j$)
he submits a bid that equals the highest current bid if $i < n$ 
or the second highest current bid if $i = n$.
Note that strategy $s_i(\cdot)$ is indeed optimal in the sequential BC auction,
since it does not change the decision that would be taken if player $i$ is truthful, so
Lemma~\ref{lem:vcg1} can be applied. We will see later that strategy $s_i(\cdot)$ has some desirable properties regarding the welfare of the players under the BC auction.

We now show that the analogue of Theorem~\ref{thm:dom-opt} does not hold for the sequential BC auctions.
In particular this means that optimal strategies are not dominant within the universe of optimal strategies in the sequential BC auction.

\begin{theorem}
Consider a sequential BC auction.
There exists a type vector $\theta \in \Theta$, a vector of optimal strategies $s(\cdot)$ and an optimal strategy $s_i'(\cdot)$
for some player $i$, such that:
\[
u_i((f,t)([(s'_i(\cdot), s_{-i}(\cdot)), \theta], \theta_i) >
u_i((f,t)([s(\cdot), \theta], \theta_i).
\]
\end{theorem}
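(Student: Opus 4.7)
The plan is to exhibit a concrete counter-example with $n = 3$. The key observation is that in a sequential BC auction, a losing player $i$ receives a redistribution payment $r_i(\theta'_{-i}) = {(\theta'_{-i})}^*_2 / n$ that depends on the second-highest announced bid among the other players. When subsequent players use non-myopic optimal strategies that depend on player $i$'s announced bid, player $i$ can inflate this redistribution by submitting a larger (but still losing) bid among his optimal options. The concrete lever will be player 3's strategy from (\ref{dominant2}), which, when player 3 is not a truthful winner, copies the second-highest bid seen so far; this lets player 2 drag player 3's bid upward by bidding higher.

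Concretely, I would take $\theta = (10, 5, 3)$. I would fix $s_1(\cdot) := \pi_1(\cdot)$ (truth-telling) and $s_3(\cdot)$ to be the strategy from (\ref{dominant2}), which in this case announces the second-highest prior bid since $\theta_3$ is no larger than all preceding bids. Both are optimal by Lemma \ref{lem:vcg1}, as neither alters the decision that would be taken under truth-telling.

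For player $i = 2$ I would compare two optimal strategies: $s_2(\cdot)$ from (\ref{equ:opt}), which announces $0$ when losing, and $s'_2(\cdot)$ from (\ref{dominant2}), which matches the highest prior bid when losing. Both are optimal by Lemma \ref{lem:vcg1}. Under $s(\cdot) = (\pi_1(\cdot), s_2(\cdot), s_3(\cdot))$, player 2 bids $0$ and then player 3 sees the prior bids $(10,0)$ and bids ${(10,0)}^*_2 = 0$; the announced vector is $(10, 0, 0)$, so player 2 loses, $t^{p}_2 = 0$, and $r_2((10, 0)) = 0$, giving $u_2 = 0$. Under $(s'_2(\cdot), s_{-2}(\cdot))$, by contrast, player 2 bids $10$ and then player 3 sees $(10,10)$ and bids ${(10,10)}^*_2 = 10$, so the announced vector is $(10, 10, 10)$; again player 2 loses and $t^{p}_2 = 0$, but now $r_2((10, 10)) = 10/3$, giving $u_2 = 10/3 > 0$, which is the required strict inequality.

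The main point to verify is that both competing strategies for player 2 are genuinely optimal, and this follows directly from Lemma \ref{lem:vcg1} since in each case the identity of the winner (player 1) is unchanged from the truth-telling outcome. There is no real obstacle beyond choosing the example carefully so that player 3's non-myopic optimal strategy transmits player 2's inflated bid into the second-highest slot of $\theta'_{-2}$.
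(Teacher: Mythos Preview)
Your proof is correct and follows essentially the same approach as the paper: both construct an $n=3$ example in which player~2, a loser, can raise his redistribution $r_2(\theta'_{-2}) = {(\theta'_{-2})}^*_2/3$ by submitting a higher (still losing) bid, because player~3's non-myopic optimal strategy echoes player~2's announcement. The only cosmetic difference is that the paper introduces a parametrized pair of strategies (bidding $\theta_{i-1}-\epsilon$ versus $\theta_{i-1}$) on $\theta=(10,9,8)$, whereas you reuse the strategies already defined in (\ref{equ:opt}) and (\ref{dominant2}) on $\theta=(10,5,3)$.
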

\begin{proof}
Consider the following two strategies: 
\[
s_i(\theta_1,\ldots,\theta_i) :=
 \left\{
        \begin{array}{l@{\extracolsep{3mm}}l}
        \theta_i    & \mathrm{if}\  \theta_i > \max_{j \in \{1, \ldots, i-1\}} \theta_j, \\
        \max  \{0,\theta_{i-1}-\epsilon\}  & \mathrm{otherwise}
        \end{array}
        \right.
\]
where $\epsilon >0$, and
\[
s'_i(\theta_1,\ldots,\theta_i) :=
 \left\{
        \begin{array}{l@{\extracolsep{3mm}}l}
        \theta_i    & \mathrm{if}\  \theta_i > \max_{j \in \{1, \ldots, i-1\}} \theta_j, \\
        \theta_{i-1}  & \mathrm{otherwise}
        \end{array}
        \right.
\]
By Lemma~\ref{lem:vcg1} both $s_i(\cdot)$ and
  $s'_i(\cdot)$ are optimal strategies since they do not alter the
  outcome that would be realized if players are thruthful. Consider
  now an instance where $n=3$ and the type vector is $\theta =
  (10,9,8)$. Under the joint strategy $s(\cdot)$ the submitted types are $(10,
  10-\epsilon, 10-2\epsilon)$ and the second player receives a
  redistribution of $\frac{10-2\epsilon}{3}$. However, if second player
  switches to $s_i'(\cdot)$, the submitted types will be $(10, 10, 10-\epsilon)$.
  Hence the redistribution to the second player is now
  $\frac{10-\epsilon}{3}$.
\end{proof}

We now turn to the question of existence of socially optimal
strategies, which we answer negatively. This is again in contrast to the
sequential Vickrey auction for which we established in Theorem \ref{thm:max}
the existence of even socially maximal strategies. 
 
\begin{theorem}
The sequential BC auction does not admit socially optimal strategies except for the first and last player.
\end{theorem}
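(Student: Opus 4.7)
The plan is to handle the boundary cases $i=1$ and $i=n$ constructively, and the intermediate cases $2 \leq i \leq n-1$ by contradiction: for any candidate socially optimal $s_i$ I will exhibit another optimal strategy $s'_i$ and a type vector $\theta$ on which $s'_i$ yields strictly larger welfare. The positive direction is immediate from earlier results; the substance lies in the negative direction.

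For player $1$, Lemma~\ref{lem:vic}$(i)$ combined with the stipulation $\max_{j \in \{1,\ldots,0\}}\theta_j = -1$ forces every optimal strategy to coincide with truth-telling, which is therefore the unique optimal strategy and is vacuously socially optimal. For player $n$, the strategy in~(\ref{dominant2}) is optimal by Lemma~\ref{lem:vcg1}; to see it is socially optimal, compare it against any other optimal $s'_n(\cdot)$. When $\theta_n > \max_{j<n}\theta_j$, player $n$ wins under any optimal strategy, and both $\hat\theta^*_2$ and $\hat\theta^*_3$ depend only on $\theta_1, \ldots, \theta_{n-1}$, so by~(\ref{eq:bc-tax}) the welfare is identical. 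When $\theta_n \leq \max_{j<n}\theta_j$, the bid $(\theta_1, \ldots, \theta_{n-1})^*_2$ prescribed by~(\ref{dominant2}) yields $\hat\theta^*_2 = \hat\theta^*_3$, achieving the smallest possible (non-negative) redistribution gap and hence the maximum welfare.

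For intermediate $i$ with $2 \leq i \leq n-1$, suppose toward a contradiction that $s_i(\cdot)$ is socially optimal. Fix any $M > 0$ and consider the input $(\theta_1, \ldots, \theta_i) = (M, 0, \ldots, 0)$; let $a := s_i(M, 0, \ldots, 0)$, which lies in $[0, M]$ by Lemma~\ref{lem:vic}$(iii)$. Pick $y \in [0, M]$ with $y \neq a$ and consider the type vector $\theta$ with $\theta_1 = M$, $\theta_j = 0$ for $2 \leq j \leq n-1$, and $\theta_n = y$. Under truth-telling myopic strategies for $j \neq i$, the announced vector is $(M, 0, \ldots, 0, a, 0, \ldots, 0, y)$; player $1$ wins and by~(\ref{eq:bc-tax}) the social welfare equals $M - \frac{2}{n}|a - y|$. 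Now let $s'_i(\cdot)$ coincide with truth-telling except at the input $(M, 0, \ldots, 0)$, where it outputs $y$. Since $y \in [0, M]$ and $M > 0$, replacing player $i$'s entry from $0$ to $y$ does not change the argmax of the announced vector for any continuation $\theta_{i+1}, \ldots, \theta_n$, so $s'_i$ preserves $f$ and is optimal by Lemma~\ref{lem:vcg1}. Under $s'_i$ the announced vector has its second- and third-highest entries both equal to $y$, giving welfare exactly $M > M - \frac{2}{n}|a - y|$. This contradicts the social optimality of $s_i$.

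The conceptual obstruction is that maximizing welfare in the BC auction amounts to driving the gap $\hat\theta^*_2 - \hat\theta^*_3$ to zero, which in our construction requires player $i$'s bid to match $\theta_n$; since $s_i$ depends only on $\theta_1, \ldots, \theta_i$, no single choice can accommodate all values of $\theta_n$. The main point of technical care is in verifying that the modified $s'_i$ is indeed optimal, which reduces via Lemma~\ref{lem:vcg1} to observing that altering player $i$'s bid within $[0, \max_{j<i}\theta_j]$ keeps $i$ off the top and therefore preserves the decision $f$.
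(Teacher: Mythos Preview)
Your proof is correct. The negative direction follows the same core idea as the paper's argument---for any candidate optimal $s_i$, exhibit a completion of the type vector and an alternative optimal bid that drives the gap $\hat\theta^*_2 - \hat\theta^*_3$ to zero while $s_i$ cannot---but your construction is different and arguably cleaner. The paper fixes $\theta_1,\ldots,\theta_{i-1}$ to be \emph{distinct} with $\theta_i$ the second highest among $\theta_1,\ldots,\theta_i$, and then case-splits on whether $s_i(\theta_1,\ldots,\theta_i)=\theta_i$; in each case it chooses $\theta_{i+1}$ so that the alternative bid $\theta'_i:=\theta_{i+1}$ makes the second and third announced bids coincide. You instead take the very concrete prefix $(M,0,\ldots,0)$, read off $a=s_i(M,0,\ldots,0)\in[0,M]$, and place the ``matching'' type at position~$n$ rather than $i{+}1$; this avoids the case-split entirely and makes the welfare formula $M-\tfrac{2}{n}|a-y|$ transparent. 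Both routes hinge on the same obstruction you articulate: player $i$ cannot see the future type that determines the target he must hit.

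You also supply the positive boundary cases ($i=1$ and $i=n$), which the paper's proof omits despite the theorem's ``except for'' clause. Your argument for $i=1$ via Lemma~\ref{lem:vic}$(i)$ is sound (uniqueness of the optimal strategy makes social optimality vacuous), and your argument for $i=n$ correctly verifies that the strategy in~(\ref{dominant2}) attains gap zero whenever $\theta_n\le\max_{j<n}\theta_j$ and that the gap is independent of player $n$'s (winning) bid otherwise; one should perhaps note explicitly that in the borderline subcase $\theta_n=\max_{j<n}\theta_j$ an optimal $s'_n$ may overbid and win, but then the winner's true type is still $\max_{j<n}\theta_j$ and the gap is non-negative, so your inequality holds there too.
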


\begin{proof}

Take $i \in \{2, \LL, n-1\}$. 
Suppose that the announced types $\theta_1,\LL,\theta_{i-1}$ are distinct and
  $\theta_i$ is the second highest bid within
  $\{\theta_1,\ldots,\theta_i\}$. Suppose $s_i(\cdot)$ is an optimal
  strategy for $i$. We show that it cannot be socially optimal in
  the BC auction.  To see this, note first that $\theta_i <
  \max_{j\in\{1,\ldots,i-1\}} \theta_j$. Hence by Lemma~\ref{lem:vic}$(iii)$
  $s_i(\theta_1,\ldots,\theta_i)\leq \max_{j\in\{1,\ldots,i-1\}}
  \theta_j$. We distinguish two cases.

\II

\noindent {\it Case 1} $s_i(\theta_1,\ldots,\theta_i) = \theta_i$. 

Then consider the following completion of the type vector:
$\theta_{i+1}$ is strictly in between $\theta_i$ and
$\max_{j\in\{1,\ldots,i-1\}} \theta_j$ and all the remaining types are
less than $\theta_i$. In this case the second and third highest bids
are $\theta_{i+1}$ and $\theta_i$ respectively and the total sum of
taxes is $\frac{2}{n}(\theta_{i+1} - \theta_i)\neq 0$. However, if
player $i$ had submitted $\theta'_i:= \theta_{i+1}$, which is an
optimal strategy, the sum of taxes would have been $0$.  
\II

\noindent {\it Case 2} $s_i(\theta_1,\ldots,\theta_i) \neq \theta_i$. 

Then $s_i(\theta_1,\ldots,\theta_i) \leq \max_{j\in\{1,\ldots,i-1\}}
\theta_j$, by Lemma~\ref{lem:vic}. Consider now the following
completion of the type vector: $\theta_{i+1} = \theta_i$ and all the
remaining types are lower than $\theta_i$. In this case, under
$s_i(\cdot)$, the second and third highest bids would not coincide and
hence the sum of taxes would not be equal to $0$. On the other hand,
for $\theta'_i:=\theta_i$, the sum of the taxes would be $0$.

Hence we can always find a completion of the type vector and an
optimal strategy $s'_i$ for player $i$ such that the sum of utilities
is higher under $s'_i$ than under $s_i$.
\end{proof}
%

The results established so far show that the sequential Vickrey
auctions and BC auctions differ in many ways.  We conclude by showing
that they do share one property. Namely, within the universe of
optimal strategies there exists an optimal strategy $s_i(\cdot)$ such
that if all players follow it, then maximal social welfare is
generated for all $\theta\in\Theta$. The desired strategy is the one
introduced in (\ref{dominant2}). This is in analogy to Theorem
\ref{thm:vic-opt}. However, the optimal strategy used in
Theorem~\ref{thm:vic-opt} is tailored to the sequential Vickrey
auction and is quite different from the strategy employed here.

\begin{theorem}\label{thm:opt-bc}
In the sequential BC auction
for all $\theta \in \Theta$ and all vectors $s'(\cdot)$ of optimal players' strategies,
\[
SW(\theta, s(\cdot)) \geq SW(\theta, s'(\cdot))
\]
where $s(\cdot)$ is the vector of strategies $s_i(\cdot)$ defined in (\ref{dominant2}). 
\end{theorem}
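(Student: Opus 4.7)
The plan is to express social welfare as an explicit function of the sorted submitted bids, and then verify case by case that $s(\cdot)$ attains its minimum over optimal strategy profiles.

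For any vector $\sigma(\cdot)$ of optimal strategies, summing the BC taxes $t_i = t_i^{p} + r_i$ using (\ref{eq:bc-redistr}) yields $\sum_i t_i([\sigma(\cdot),\theta]) = \frac{2}{n}([\sigma(\cdot),\theta]^{*}_{3} - [\sigma(\cdot),\theta]^{*}_{2})$, so
\[
SW(\theta,\sigma(\cdot)) \;=\; \theta_{f([\sigma(\cdot),\theta])} \;-\; \tfrac{2}{n}\bigl([\sigma(\cdot),\theta]^{*}_{2} - [\sigma(\cdot),\theta]^{*}_{3}\bigr).
\]
Corollary~\ref{cor:f}(iv) ensures $\theta_{f([\sigma(\cdot),\theta])} = \theta^{*}_{1}$ for any optimal profile: either $f([\sigma(\cdot),\theta]) = f(\theta)$, or the exceptional case $\theta_n = \max_{i\neq n}\theta_i = \theta^{*}_{1}$ occurs and player $n$, whose true type is $\theta^{*}_{1}$, wins. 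Hence the first summand is invariant across optimal profiles, and it remains to show that $[\sigma(\cdot),\theta]^{*}_{2} - [\sigma(\cdot),\theta]^{*}_{3}$ is minimized at $\sigma(\cdot) = s(\cdot)$.

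Next I would describe the submitted vector under $s(\cdot)$. Writing $M_j := \max_{k\leq j}\theta_k$, a routine induction using (\ref{dominant2}) gives $[s(\cdot),\theta]_j = M_j$ for every $j<n$: a new-leader player submits his true type $\theta_j = M_j$, and a non-leader submits the current maximum, which inductively equals $M_{j-1} = M_j$. Player $n$ submits $\theta_n$ if $\theta_n > M_{n-1}$, and otherwise duplicates the second largest entry of $(M_1,\ldots,M_{n-1})$. A case split on $\theta_n$ now handles $s(\cdot)$: when $\theta_n \leq M_{n-1}$, the duplication forces $[s(\cdot),\theta]^{*}_{2} = [s(\cdot),\theta]^{*}_{3}$, so the target equals $0$, its global minimum; when $\theta_n > M_{n-1}$ and $M_{n-2} = M_{n-1}$, the first $n-1$ submissions already contain two copies of $M_{n-1}$ and the target is again $0$. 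The only non-trivial sub-case is $\theta_n > M_{n-1}$ with $M_{n-2} < M_{n-1}$, where the $s(\cdot)$-target equals $M_{n-1} - M_{n-2} > 0$.

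The main obstacle is matching this value as a lower bound over arbitrary optimal profiles $s'(\cdot)$. For this I would apply Lemma~\ref{lem:vic}: the hypothesis $M_{n-2} < M_{n-1}$ forces $\theta_{n-1} > M_{n-2}$, so by part~(i) player $n-1$ must submit $\theta_{n-1} = M_{n-1}$; by part~(ii) player $n$ must submit strictly above $M_{n-1}$; and by parts~(i) and~(iii) any submission of a player $j \leq n-2$ is bounded by $M_{j-1} \leq M_{n-2}$ (a leader submits $\theta_j = M_j \leq M_{n-2}$, a non-leader at most $M_{j-1}$). It follows that $[s'(\cdot),\theta]^{*}_{2} = M_{n-1}$ and $[s'(\cdot),\theta]^{*}_{3} \leq M_{n-2}$, hence $[s'(\cdot),\theta]^{*}_{2} - [s'(\cdot),\theta]^{*}_{3} \geq M_{n-1} - M_{n-2}$, matching $s(\cdot)$'s value and completing the proof.
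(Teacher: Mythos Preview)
Your argument is correct and follows essentially the same route as the paper: reduce to comparing $[\sigma(\cdot),\theta]^*_2 - [\sigma(\cdot),\theta]^*_3$, dispose of the cases where this equals $0$ under $s(\cdot)$, and in the remaining case ($\theta_n > M_{n-1} > M_{n-2}$) use Lemma~\ref{lem:vic} to pin down what players $n-1$ and $n$ must submit under any optimal profile and bound the earlier submissions by $M_{n-2}$.

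The one organizational difference is worth noting. The paper handles the final bound via an inductive coordinate-wise comparison (its Claim~\ref{thirdhighest}: $[s(\cdot),\theta]_j \geq [s'(\cdot),\theta]_j$ for all $j<n$), whereas you bypass that induction by first computing $[s(\cdot),\theta]_j = M_j$ explicitly and then invoking Corollary~\ref{cor:f}(i) to get $[s'(\cdot),\theta]_j \leq M_j \leq M_{n-2}$ for $j\leq n-2$ directly. Your version is a bit leaner, since Corollary~\ref{cor:f}(i) already encapsulates the induction the paper re-runs. One small exposition point: your sentence ``bounded by $M_{j-1}\leq M_{n-2}$'' is literally false for a leader (who submits $\theta_j = M_j$, not $\leq M_{j-1}$); the parenthetical you add fixes this, but it would read more cleanly to say the submission is bounded by $M_j \leq M_{n-2}$, which follows immediately from Corollary~\ref{cor:f}(i).
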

\begin{proof}
Consider a type vector $\theta = (\theta_1,\LL,\theta_n)$ and let
$s'(\cdot)$ be an arbitrary vector of optimal strategies.
By Corollary~\ref{cor:f}$(iv)$ the initial social welfare, i.e., the (initial) utility of the winner,
is the same under $s(\cdot)$ and $s'(\cdot)$. Hence we only need to compare the
aggregate tax paid by the players.

We now proceed in three steps. First suppose that
$\theta$ is such that $\theta_n \leq \max_{j\in\{1,\ldots,n-1\}}
[s(\cdot),\theta]_j$.
By the definition of $s(\cdot)$ in (\ref{dominant2}) and the definition of
the BC auction (see (\ref{eq:bc-tax})), we know that player $n$ submits the currently second
highest bid under $s(\cdot)$ and as a result the taxes are $0$. Hence for such type vectors
maximal social welfare is generated when players follow $s(\cdot)$.

Thus we may assume that $\theta_n > \max_{j\in\{1,\ldots,n-1\}}
[s(\cdot),\theta]_j$.
Suppose now that $\theta_{n-1} \leq \max_{j\in\{1,\ldots,n-2\}}
[s(\cdot),\theta]_j$. Then player $n-1$ submits the currently highest bid
and since player $n$ has the highest bid, the taxes sum up to $0$ by (\ref{eq:bc-tax}).

Hence the remaining case is when $\theta_n >
\theta_{n-1} >$ \\ $\max_{j\in\{1,\ldots,n-2\}} [s(\cdot),\theta]_j$. In this case we know that when the players follow $s(\cdot)$,
player $n$ is the winner and player $n-1$ submits $\theta_{n-1}$.
We claim that the same happens under any other optimal strategy.

Lemma~\ref{lem:vic} and Corollary \ref{cor:f}$(iii)$ imply that when
players follow $s'(\cdot)$, player $n$ is the winner and player $n-1$
submits $\theta_{n-1}$. Thus the aggregate tax under $s(\cdot)$ is:
$$ \sum_i t_i(\theta) = \frac{2}{n}(\theta_{n-1} - {[s(\cdot),\theta]}^{\ast}_3)$$
and under $s'(\cdot)$:
$$ \sum_i t_i(\theta) = \frac{2}{n}(\theta_{n-1} - {[s'(\cdot),\theta]}^*_3).$$
 
It suffices now to compare the third highest bid in $s(\cdot)$ and
$s'(\cdot)$. 
For this we need the following more general claim.
\begin{claim}\label{thirdhighest}
Let $\theta\in\Theta$, $z = ([s(\cdot),\theta]_1,\ldots,[s(\cdot),\theta]_{n-1})$ and
$z' = ([s'(\cdot),\theta]_1,\ldots$, $[s'(\cdot),\theta]_{n-1})$. Then for all $i\in\{1,\ldots,n-1\}$, $z_i\geq z_i'$.
\end{claim}

This claim also helps in understanding the intuition behind strategy
$s_i(\cdot)$. In particular, under the joint strategy $s(\cdot)$, any
bidder, except the last one, whose type is no more than the currently
highest bid, is bidding the maximal possible value among the set of
his optimal strategies.  
\II

\NI
\emph{Proof.}
The proof is by induction using Lemma~\ref{lem:vic}.  
If $i=1$, then by Lemma~\ref{lem:vic}, $z_1 = z_1' = \theta_1$.
For the induction step, suppose the Lemma holds for all $i<k$, where $k\geq 2$.
We argue about $z_k$ and $z_k'$. By the induction hypothesis $\max_{j\in\{1,\ldots,k-1\}} z_j \geq \max_{j\in\{1,\ldots,k-1\}} z'_j$.\\

\noindent {\it Case 1} $\theta_k > \max_{j\in\{1,\ldots,k-1\}} z_j \geq \max_{j\in\{1,\ldots,k-1\}} z'_j$. Then by Lemma~\ref{lem:vic} $z_k = z'_k = \theta_k$.
\II

\noindent {\it Case 2} $ \max_{j\in\{1,\ldots,k-1\}} z_j \geq \max_{j\in\{1,\ldots,k-1\}} z'_j \geq \theta_k$. Then Lemma~\ref{lem:vic} implies $z_k = \max_{j\in\{1,\ldots,k-1\}} z_j$. On the other hand:
\[
z_k' \leq \max_{j\in\{1,\ldots,k-1\}} z'_j \leq \max_{j\in\{1,\ldots,k-1\}} z_j = z_k.
\]

\noindent {\it Case 3} $ \max_{j\in\{1,\ldots,k-1\}} z_j \geq \theta_k > \max_{j\in\{1,\ldots,k-1\}} z'_j$. Then $z_k = \max_{j\in\{1,\ldots,k-1\}} z_j$ and $z'_k = \theta_k$.
\HB
\II

The above claim does not hold if we also take the last player into
account.  The reason is that when the last player's type equals the
currently highest bid, then he is indifferent between winning the item
or not and the set of his optimal strategies is the whole type space
$\Theta$.

To complete the proof now, note that the above claim implies that the
second highest bid among the first $n-1$ bids is at least as big in
$s(\cdot)$ as in $s'(\cdot)$.  But since we are in the case that
$\theta_n > \max_{j\in\{1,\ldots,n-1\}} [s(\cdot),\theta]_j$, this
means that ${[s(\cdot),\theta]}^{\ast}_3 \geq
{[s'(\cdot),\theta]}^{\ast}_3$.
\end{proof}

This maximal final social welfare under $s(\cdot)$ equals 
\[SW(\theta, s(\cdot)) = \theta_i - \frac{2}{n} ({[s(\cdot),\theta]}^*_2 - {[s(\cdot),\theta]}^*_3)\]
where $i = \textrm{argsmax} \: \theta$. This is always greater than or
equal to the final social welfare achieved in a BC auction when
players bid truthfully, which is $\theta_i - 2/n(\theta^*_2 -
\theta^*_3)$.
Additionally, for some inputs, for example when the last
player is not the winner, it is strictly greater.  It is also strictly
greater when the last player is the winner but the last but one player
does not have the highest type among the first $n-1$ players.

\section{Comments on a Nash implementation}
\label{sec:nash}

In this paper we studied sequential mechanisms. Alternatively, we
could view them as simultaneous ones in which each player $i$ receives
a type $\theta_i \in \Theta_i$ and subsequently submits a function
$r_i: \Theta_1 \times \LL \times \Theta_{i-1} \myra \Theta_i$ instead
of a type $\theta'_i \in \Theta_i$.  The submissions are simultaneous.
So the behaviour of player $i$ can be described by a strategy $s_i:
\Theta_1 \times \LL \times \Theta_i \myra \Theta_i$ which when applied
to the received type $\theta_i$ yields the function $s_i(\cdot,
\theta_i): \Theta_1 \times \LL \times \Theta_{i-1} \myra \Theta_i$
that player $i$ submits.  Define then 
\[
\mbox{$s(\cdot) \succeq_i s'(\cdot)$ iff for all $\theta \in \Theta$,
   $v_i(f([s(\cdot), \theta]), \theta_i) \geq v_i(f([s'(\cdot),
   \theta]), \theta_i)$).}
\]

We now say that a joint strategy $s(\cdot)$ is a \oldbfe{Nash
  equilibrium} if for all $i \in \{1, \LL, n\}$ and all 
strategies $s'_i(\cdot)$ of player $i$ we have
\[
(s_i(\cdot), s_{-i}(\cdot)) \succeq_i (s'_i(\cdot), s_{-i}(\cdot)).
\]

By Groves theorem for all sequential Groves
auctions the vector of truth-telling strategies $\pi_i(\cdot)$ is a
Nash equilibrium.  In contrast, the vector of strategies $s_i(\cdot)$
defined in (\ref{equ:opt}) is \emph{not} a Nash equilibrium in a
sequential Vickrey auction. Indeed, take two players and $\theta =
(1,2)$. Then for player 1 it is advantageous to deviate from
$s_1(\cdot)$ strategy and submit $3$. This way player 2 submits 0 and
player's 1 final utility becomes 1 instead of 0. The same remark holds
for sequential BC auctions.

On the other hand, if we only admit optimal strategies, then
Theorems \ref{thm:dom-opt} and \ref{thm:vic-opt} state that in the case
of sequential Vickrey auctions the vector of
strategies $s_i(\cdot)$ defined in (\ref{equ:opt}) is a dominant
strategy Nash equilibrium that is also Pareto optimal.
An analogous result holds for sequential BC auctions, though the qualification
`dominant' needs to be dropped. We omit the proof.

\section{Final remarks}
\label{sec:conc}

This paper
and our previous recent work, \cite{AE08}, 
forms part of a larger research endevour in which we seek in
sequential versions of commonly used incentive compatible mechanisms
optimal strategies that, when followed by all players, yield a maximal
social welfare. We studied sequential versions of two mechanisms in
the single unit auction setting: Vickrey auction and Bailey-Cavallo
mechanism.  We showed that in each of them natural optimal strategies
exist with the property that when each player follows them, a maximal
social welfare results. 

One could carry out a similar analysis for other Groves auctions with
relatively simple redistribution functions.  However, we are not aware
of a uniform approach to identify in Groves auctions optimal
strategies that maximize social welfare.  We believe that these
results can be extended to multi-unit auctions with unit demand
bidders.  A natural question is whether similar results can be
established for other types of auctions, for example auctions with
single-minded bidders or more general combinatorial auctions.

The maximal social welfare is attained here in the sense that any
vector of different optimal strategies yields a smaller social
welfare.  This is in contrast to the customary, simultaneous setting,
in which weaker results were established in \cite{ACGM08}, namely that
specific incentive compatible mechanisms are undominated.  Various
related questions remain open. For example, in the case of the public
project problem with unequal participation costs it is not known
whether undominated feasible (simultaneous) Groves mechanisms exist.

We would like to undertake similar study of the incentive compatible
mechanism proposed in \cite{NR01}, and of its sequential version. This
mechanism deals with the problem of purchasing a shortest path in a
network and is, in contrast to the mechanisms here considered, not
feasible.


\end{document}